\documentclass[lettersize,journal]{IEEEtran}
\usepackage{amsmath,amssymb,amsfonts}
\usepackage{algorithmic}
\usepackage{graphicx}
\usepackage{textcomp}
\usepackage{stfloats}
\usepackage{xcolor}
\usepackage{subfigure}
\usepackage{amsthm}
\newtheorem{theorem}{Theorem}
\usepackage{comment}
\usepackage{enumitem}
\usepackage{amsmath,amsfonts}

\newtheorem{lemma}{Lemma}

\usepackage{cite}
\begin{document}
\title{ Over-the-Air Interference Nulling Using Passive RIS for Two-Way $K$-User Interference Channel}
\author{
    \IEEEauthorblockN{Junzhi Wang, Jun Sun, Limin Liao, Xiangbai Liao, Yingzhuang Liu}\\
     \thanks{Junzhi Wang, Jun Sun, Limin Liao, and Yingzhuang Liu are with the School of Electronic Information and Communications, Huazhong University of Science and Technology, Wuhan, China.

     Xiangbai Liao is with Hunan Institute of Technology, Hengyang, China.} 
}

\maketitle

\begin{abstract}
Interference constitutes the fundamental performance bottleneck in wireless networks. Meanwhile, reconfigurable intelligent surface (RIS) has emerged as a promising technique for interference mitigation by directly modifying wireless channels. In this paper, we are interested in the following problem: whether \textit{interference-free} transmission (in terms of Degree-of-Freedom, DoF) can be achieved with the aid of passive RIS in the two-way K-user interference channel, which is regarded as the most severely interfered network.
We show that the answer is affirmative, i.e., interference in this network can be neutralized over the air. To accomplish this goal, two prominent challenges arise: i) the unit-modulus constraint on each RIS reflecting coefficient; ii) the significant disparity between the strengths of the direct and reflective channels.
To address these challenges, we exploit the high-dimensional and random nature of wireless channels. Specifically, we cast the problem within a high-dimensional convex geometric framework, which enables us to leverage the ubiquitous \textit{concentration} phenomenon in high-dimensional spaces.
Based on this framework, we establish both sufficient and necessary conditions on the required number of RIS elements to achieve interference-free DoF, which turns out to \textit{coincide} in order sense.
Furthermore, we characterize the impact of imperfect channel state information (CSI) on the achievable DoF and show that interference-free DoF remains achievable if the CSI error is below a certain threshold.
Simulation results validate our theoretical findings.

\end{abstract}

\begin{IEEEkeywords}
Reconfigurable intelligent surface, degree of freedom, interference nulling.
\end{IEEEkeywords}

\section{Introduction}

Interference is a fundamental property of wireless channels due to their broadcast and superposition nature. With the unprecedented connection density envisioned for 6G networks, spectrum scarcity is further exacerbated, rendering time–frequency resource sharing inevitable. As a result, efficient interference management becomes critical for achieving the high throughput and reliability required by future 6G systems\cite{ heydarishahreza2024spectrum, torrens2024modeling, tang2024interference, abode2025goal}.

Meanwhile, the reconfigurable intelligent surface (RIS) has emerged as a promising technique for interference management, as it can directly reconfigure wireless channels \cite{bafghi2022degrees,jiang2022interference,chae2022cooperative,10706811}, thereby providing a more practical solution than the celebrated interference alignment (IA) technique \cite{cadambe2008interference}.
Specifically, \cite{jiang2022interference} demonstrates  \textit{numerically} that almost interference nulling can be achieved in the $K$-user interference channel, i.e., interference-free degree of freedom (DoF) are attainable with the aid of a finite-size RIS. Furthermore, \cite{wang2024achieving} proves  \textit{theoretically} that interference-free transmission can indeed be achieved in a class of more general interference networks, namely cellular networks, by casting the problem into the framework of high-dimensional probability and convex geometry.
Motivated by these inspiring results, we aim to investigate the feasibility of interference-free transmission in a more challenging setting, namely the two-way $K$-user interference channel (TW-K-IFC), which is regarded as the most severely interfered network since every node experiences interference from all the other nodes.
Specifically, we aim to address the following two issues:

\begin{enumerate}[label=\arabic*)] 
\item \textit{Is interference-free transmission possible with the aid of passive RIS in TW-K-IFC, considering the unit-modulus constraints of each RIS element (RE) and the huge disparity between the direct and reflected links? If so, how many REs are needed at least?}

Mathematically, achieving interference-free transmission is equivalent to characterizing the feasibility of a random linear system, i.e., ${\bf{Ax}} + {\bf{b}} = 0$, where ${\bf{A}}$ and ${\bf{b}}$ represent the interference channels of the cascaded reflective and direct links, respectively, and ${\bf{x}} \in \mathbb{C}^N$ denotes the adjustment provided by the RIS, with $N$ being the number of REs.
If ${\bf{x}}$ is unconstrained, it is obvious that this system is solvable if and only if ${\text{rank}}\left( {\mathbf{A}} \right) = {\text{rank}}\left( {\left[ {{\mathbf{A}}|{\mathbf{b}}} \right]} \right) \leqslant N$. 
However, for passive RIS, each RE can only apply a phase shift ($\left|x_i\right| = 1, \forall i$), thus the columns of ${\bf A}$ can be combined only via phase rotations. As a result, canceling a strong  direct link ${\bf b}$ requires more columns, while the exact number remains an open problem.

\item \textit{How does imperfect CSI affect the DoF of the TW-IFC?}

The performance of RIS depends on both the number of REs and CSI accuracy. In practice, acquiring accurate CSI is challenging due to the passive nature of RIS. The CSI errors may hinder interference cancellation and degrade system performance. However, this degradation is difficult to characterize because the modulus-1 constraints of passive REs render closed-form expressions for optimal phase shifts intractable, especially in multi-user systems. 

\end{enumerate}


It is worth noting that the two-way IFC is full-duplex in nature, 
thus efficient self-interference cancellation (SIC) is essential. Since this paper focuses on network-level interference management, we assume that SIC can be achieved with negligible performance degradation using the latest techniques \cite{islam2019unified,liao2025analog,liao2025digital}, in which self-interference can be suppressed to the noise floor. Consequently, inter-user interference becomes the dominant limiting factor, and this work focuses on its mitigation.

\subsection{Prior Works}

Compared with the conventional interference alignment, which requires symbol extension and achieves the DoF of $\frac{K}{2}$ in the $K$-user interference channel \cite{cadambe2008interference}, recent studies have shown that RIS can completely eliminate interference without any symbol extension, thereby achieving the interference-free or full DoF $K$ \cite{bafghi2022degrees,jiang2022interference,chae2022cooperative}. 
In particular, \cite{bafghi2022degrees} demonstrated that if the number of passive REs approaches infinity, the full DoF $K$ can be achieved. A tighter bound was provided in \cite{jiang2022interference}, showing that under weak direct channels, slightly more than $2K(K-1)$ passive REs are enough to achieve the full DoF $K$. Furthermore, by simulations, the authors of \cite{jiang2022interference} revealed that when the strength of the direct channel exceeds a certain threshold, additional passive REs are required. However, the precise relationship between direct channel strength and the required number of passive REs for interference-free communication remains an open problem.

Most existing studies assume perfect CSI. In practice, however, acquiring accurate RIS-related CSI is highly challenging due to the lack of radio frequency (RF) chains and signal processing capabilities at the passive RIS \cite{gao2023robust,li2023performance}. As RIS design heavily relies on CSI, unavoidable estimation errors may substantially degrade system performance. For instance, \cite{zhou2020framework} demonstrated that when the cascaded reflective channel error is large, the base station (BS) transmit power must increase with the number of REs ($N$) to meet the quality-of-service requirements. This contrasts with the perfect CSI case, where the transmit power decreases with the number of REs, scaling as $O\left( {{N^2}} \right)$ \cite{wu2019intelligent}.

Moreover, imperfect CSI can also cause substantial capacity degradation \cite{yang2021performance}. To characterize this effect, the authors of \cite{singh2024imperfect} derived an exact mean signal-to-noise ratio (SNR) expression for a single-user system and demonstrated that, as the error variance increases, the growth trend of the mean SNR reduces from $O\left( {{N^2}} \right)$ to $O\left( N \right)$. 
However, deriving such scaling laws for multi-user systems is more challenging due to the severe inter-user interference. 
When the base station (BS) is equipped with $M$ antennas exceeding the number of single-antenna users, \cite{zhi2022ris} revealed that the uplink rate with zero-forcing (ZF) detection scales as $O\left( {{{\log }_2}\left( {MN} \right)} \right)$. In contrast, with maximal ratio combining (MRC) detection, the rate scales only as $O\left( {{{\log }_2}\left( 1 \right)} \right)$ with respect to $N$ \cite{zhi2022analysis,zhi2022two}. 
To optimize semantic computation rates under jamming and CSI imperfection in a multi-functional RIS- and mobile-edge-computing-assisted network, \cite{sun2024multi} transformed the jammer’s CSI imperfection into a worst-case one via discretization. This effective approach for imperfect CSI has also been adopted in multi-high-altitude-platform-assisted networks \cite{lin2025secure} and rate-splitting-assisted stacked metasurface transceivers \cite{sun2025dual} to maximize system capacity.
Nevertheless, the sum-rate (or DoF) characterization of single-antenna systems under severe inter-user interference with imperfect CSI remains an open problem.

\subsection{Main Contributions}
To tackle the aforementioned challenges for achieving interference-free transmission in the TW-K-IFC, we follow the idea in \cite{wang2024achieving} and \cite{wang2025interference}, which leverages the high dimensionality and randomness of wireless channel coefficients. Specifically, the feasibility problem is cast into the framework of high-dimensional probability and convex geometry, allowing us to exploit the universal \textit{concentration} phenomenon in high-dimensional spaces. Based on this approach, we establish that interference-free DoF in the TW-K-IFC is indeed achievable with the aid of RIS. Furthermore, we derive the necessary and sufficient conditions on the number of REs for achieving this goal. The key results of this paper are summarized as follows:

\begin{enumerate}[label=\arabic*)]
\item 
For the scenario with perfect CSI, the necessary and sufficient conditions on the number of REs for interference-free DoF (i.e., $2K$) are of the same order, thereby defining the critical point (${N_{\text{crit}}}$) for the feasibility of interference-free transmission. Specifically,
\begin{equation}
    {N_{{\text{crit}}}} = \left\{ {\begin{aligned}
&O\left( {{K^2}} \right),\;{\text{if}}\:\eta < K\\
&O\left( {K\eta } \right),\;{\text{if}}\:\eta > {K}
\end{aligned}} \right.
,
\end{equation}
where $K$ is the number of user pairs, $\eta$ denotes the link-strength ratio\footnote{Specifically, $\eta  = \sqrt {\eta _1^2 + \eta _2^2}$. For the necessary condition, $\eta_1$ and $\eta_2$ correspond to the ratios of the weakest same-side and cross-side direct links to the strongest cascaded reflective link, respectively. In contrast, for the sufficient condition, they correspond to the ratios of the strongest direct links on each side to the weakest cascaded reflective link.}  of the direct link to cascaded link.
Intuitively, when the direct links are weak relative to the cascaded reflective links, the required number of REs depends only on the number of user pairs; otherwise, it additionally depends on the link-strength ratio,  implying more REs are required.

\item
For the scenario with imperfect CSI, a sum DoF of $2K\min \left( {\alpha,1} \right)$ is achievable, where $\alpha$ denotes the CSI error exponent, i.e., the CSI error variance scales as $\text{SNR}^{-\alpha}$. This holds when the number of REs satisfies
\begin{equation}
N \geqslant {\bar N_\text{suff}} = \left\{ \begin{gathered}
  O\left( {{K^2}} \right),\;{\text{if}}\:\bar\eta < K, \hfill \\
  O\left( {K\bar\eta } \right),\;{\text{if}}\:\bar\eta > K, \hfill \\ 
\end{gathered}  \right.
\end{equation}
where $\bar\eta$ denotes the estimated link-strength ratio\footnote{Here $\bar\eta  = \sqrt {\bar\eta _1^2 + \bar\eta _2^2}$, and the strength ratios ${\bar\eta _1} = \frac{{{\bar\sigma _2}}}{\varsigma{{\underline{\sigma}_1}}}$ and ${\bar\eta _2} = \frac{{{\bar\sigma _3}}}{{\varsigma{\underline{\sigma}_1}}}$.  
${{\bar \sigma }_2}$ and ${{\bar \sigma }_3}$ represent the path loss of the strongest same-side and cross-side direct links, respectively.  $\underline{\sigma}_1$ represents the weakest cascaded reflective link. 
The CSI accuracy factor $\varsigma  = \sqrt {1 - {\text{SN}}{{\text{R}}^{ - \alpha }}}$.}.
Clearly, if $\alpha$ is no smaller than 1, there would be no DoF loss. Fortunately, in practice, the common channel estimators (LS, MMSE) achieve $\alpha=1$.

\end{enumerate}

\subsection{Paper Organization and Notation}
The rest of the paper is organized as follows. Section~\ref{System_Model} introduces the system model. Section~\ref{PRE} analyzes the required number of passive REs for interference-free transmission under perfect CSI. Section~\ref{Law} investigates the system DoF under imperfect CSI. Section~\ref{Simulation} presents the numerical results. Finally, Section~\ref{Conclusion} concludes the paper.

\emph{Notations:} Scalars, vectors, and matrices are denoted by lowercase ($x$), boldface lowercase (${\bf{x}}$), and boldface uppercase letters (${\bf{X}}$), respectively. The operators ${\left(  \cdot  \right)^*}$, ${\left(  \cdot  \right)^T}$, and ${\left(  \cdot  \right)^H}$ denote the conjugate, transpose, and conjugate transpose, respectively. ${\left(  \cdot  \right)^ + }$ and $\left|  \cdot  \right|$ represent the pseudo-inverse and Frobenius norm. $E\left(  \cdot  \right)$ and $\operatorname{var} \left(  \cdot  \right)$ denote the expectation and variance. ${\text{diag}}\left( {\mathbf{x}} \right)$ returns a diagonal matrix with ${\mathbf{x}}$ on the diagonal, and ${{\mathbf{I}}_K}$ denotes the $K \times K$ identity matrix. ${\text{pro}}{{\text{j}}_{\mathbf{A}}}\left( {\mathbf{x}} \right)$ denotes the projection of ${\mathbf{x}}$ onto the column space of ${\mathbf{A}}$.

\section{System Model}
\label{System_Model}
The RIS-aided two-way $K$-user interference channel is illustrated in Fig. \ref{fig1}, where a passive RIS with $N$ reflective elements is deployed between the $K$ single-antenna user pairs. Since all users operate in full-duplex mode, they can transmit and receive signals simultaneously. Consequently, each user $k$ receives not only the desired signal from its pair $k+K$, but also the interference from other users on both the same and cross sides within the same time slot.

For theoretical insights, we first analyze the ideal case with perfect CSI for both the RIS and all users \cite{peng2021multiuser,chen2023next}, under which the required number of passive REs for interference-free transmission is derived. We then extend the analysis to the imperfect CSI case and characterize the resulting system DoF in the latter part of the paper.

\begin{figure}
	\centering\includegraphics[width=5.5cm]{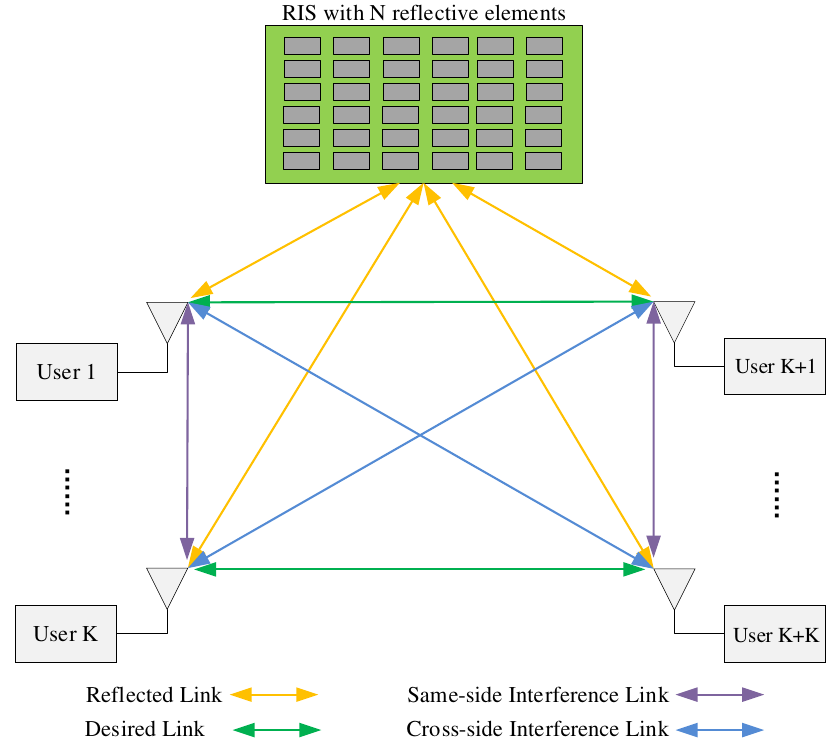}
	\caption{The RIS-aided two-way $K$-user interference channel.}
	\label{fig1}
\end{figure}

\subsection{Channel Model}
Let ${{\mathbf{h}}_{Ii}} \in {\mathbb{C}^{N \times 1}}$ denote  the channel from user $i$ to the RIS, and 
${{\mathbf{h}}_{iI}}\in {\mathbb{C}^{N \times 1}}$ denote  the channel from the RIS to user $i$. 
The direct channel from user $j$ to user $i$ is denoted by ${h_{ij}} \in {\mathbb{C}^{1 \times 1}}$.
A time-division duplexing protocol is adopted, where all transmissions operate at the same carrier frequency but in different time slots within the channel coherence interval. Hence, the physical propagation channels are assumed to be reciprocal for all links \cite{atapattu2020reconfigurable,ma2021joint}.

Following \cite{jiang2022interference}, which demonstrates that the phase transition in the required number of REs (i.e., $N$) for interference nulling is independent of the channel model
, we model all channels as Rayleigh fading for simplicity.
Moreover, although path loss varies with user locations, the primary factor affecting $N$ is the intensity disparity between the direct and cascaded reflective links. To characterize how this intensity disparity affects $N$, we assume uniform path loss\footnote{Assuming uniform path loss is both practical and convenient for analysis. Since stronger direct links cause more severe interference, more REs are required for suppression. Hence, the strongest direct link with the weakest reflective link represents the worst case for determining the maximum number of REs needed to guarantee interference-free transmission, while the weakest direct link with the strongest reflective link determines the minimum number of REs. } 
for each link type. 
Specifically, for reflective links  ${{\bf{h}}_{Ii}} \sim {\cal C}{\cal N}\left( {{\bf{0}},\tau^2{{\bf{I}}_N}} \right)$ with path loss factors $\tau^2$.  
For direct links between same-side users (i.e., $i,j \in \left[ {1,K} \right]$ or $i,j \in \left[ {K + 1,2K} \right]$), ${h_{ij}} \sim {\cal C}{\cal N}\left( {0,\sigma _2^2} \right)$, while for cross-side users
($i \in \left[ {1,K} \right]$, $j \in \left[ {K + 1,2K} \right]$), ${h_{ij}} \sim {\cal C}{\cal N}\left( {0,\sigma _3^2} \right)$. 
To reflect the fact that same-side links are typically stronger due to shorter distances, we assume ${\sigma_2^2} > {\sigma_3^2}$.
The scenarios involving different channel models and user-specific path losses are simulated in Section \ref{Simulation}, and the results validate these assumptions.

\subsection{Signal Model}
The transmit signal of user $i$ is given by
${x_i} = \sqrt {{P_i}} {s_i}$,
where ${s_i}$ denotes the data symbol and ${{P_i}}$ is the corresponding transmit power.
Each ${s_i}$ is assumed to be an independent complex Gaussian symbol with zero mean and unit variance. Accordingly, the transmitted signals satisfy
$E\left( {{x}_{i}}x_{j}^{*} \right)=\left\{ \begin{matrix}
   {{P}_{i}},i=j  \\
   0,i\ne j  \\
\end{matrix} \right.$.
Thus, the signal reflected from the passive RIS is given by
\begin{equation}
{{\bf{x}}_{r}}{\rm{ = diag}}\left( {\bf{v}} \right)\sum\limits_{i = 1}^{2K} {{{\bf{h}}_{Ii}}{x_i}}, 
\end{equation}
where $\sum\limits_{i = 1}^{2K} {{{\bf{h}}_{Ii}}{x_i}} $ denotes the total signal received by the passive RIS, and
${\mathbf{v}} = {\left[ {{e^{j{\beta _1}}}, \cdots ,{e^{j{\beta _N}}}} \right]^T} \in {\mathbb{C}^{N \times 1}},{\beta_n} \in \left[ {0,2\pi } \right),\forall n$.

Note that, due to simultaneous transmission and reception, each user also experiences self-interference from its own transmit signal. However, since each user knows its own transmitted signal and the RIS configuration, 
this self-interference can be canceled\footnote{The case where self-interference cancellation is impossible is beyond the scope of this paper and left for our future work.} \cite{atapattu2020reconfigurable}.
Therefore, the received signal at user $i \in \left[ {1,K} \right]$ can be denoted as ($i \in \left[ {K+1,2K} \right]$ is similar.)
\begin{equation}
\begin{aligned}
&{{y_{i}}} = {\bf{h}}_{iI}^H{\rm{diag}}\left( {\bf{v}} \right)\sum\limits_{j \ne i}^{2K} {{{\bf{h}}_{Ij}}{x_j}}  + \sum\limits_{j \ne i}^{2K} {{h_{ij}}{x_j}}  + {n_i}\\
&  = \left( {{\bf{a}}_{i\left( {K + i} \right)}^H{\bf{v}} + {h_{i\left( {K + i} \right)}}} \right){x_{\left( {K + i} \right)}} + {n_i}\\
& + \underbrace {\sum\limits_{j \in [1,K],\atop
j \ne i} {\left( {{\bf{a}}_{ij}^H{\bf{v}} + {h_{ij}}} \right){x_j}} }_{{\text{Same-side interference}}} + \underbrace {\sum\limits_{j \in [K + 1,2K],\atop
j \ne K + i} {\left( {{\bf{a}}_{ij}^H{\bf{v}} + {h_{ij}}} \right){x_j}} }_{{\text{Cross-side interference}}},
\end{aligned}
\label{S1}
\end{equation}
where ${n_i} \sim \mathcal{C}\mathcal{N}\left( {0,\sigma^2} \right)$ is the additive white Gaussian noise (AWGN) at the user $i$, ${\mathbf{a}}_{ij}^H = {\mathbf{h}}_{iI}^H\rm{diag}\left( {{{\mathbf{h}}_{Ij}}} \right) \in {\mathbb{C}^{1 \times N}}$ is the cascaded reflective channel from user $j$ to user $i$ and can be approximated as Gaussian distribution\footnote{Although each element of the cascaded channel is the product of two Rayleigh fading coefficients and is therefore not standard Gaussian, its mean and variance can be exactly characterized. Moreover, since the subsequent analysis depends primarily on the first- and second-order statistics of the effective channel, it is analytically tractable to adopt a moment-matched complex Gaussian approximation for each cascaded channel element. This approximation is further validated by numerical simulations.}, 
i.e., ${{\bf{a}}_{ij}} \sim {\cal C}{\cal N}\left( {{\bf{0}},\sigma _1^2{{\bf{I}}_N}} \right)$ with ${\sigma _1} = \tau \times \tau = {\tau^2}$ \cite{9854102}. 

\section{Required Number of REs for Full-DoF}
\label{PRE}
This section begins by formulating an equivalent feasibility problem for achieving interference-free transmission. Then, the necessary and sufficient conditions for its solvability, i.e., the required number of REs, are established by analyzing the geometric structure of the interference-nulling equations and the unit-modulus constraints.

\subsection{Equivalent Feasibility Problem}
To achieve interference-free transmission, all signals except those from the intended pair should be eliminated for each user. Specifically, for user $i \in \left[ {1,K} \right]$, the following conditions must be satisfied simultaneously ($i \in \left[ {K+1,2K} \right]$ is similar.)
\begin{equation}
\left\{ {\begin{aligned}
&\left( {{\mathbf{a}}_{i\left( {K + i} \right)}^H{\mathbf{v}} + {h_{i\left( {K + i} \right)}}} \right){x_{\left( {K + i} \right)}} \ne 0,\\
&\left( {{\mathbf{a}}_{ij}^H{\mathbf{v}} + {h_{ij}}} \right){x_j} = 0,\forall j \in \left[ {1,K} \right],j \ne i,\\
&\left( {{\mathbf{a}}_{ij}^H{\mathbf{v}} + {h_{ij}}} \right){x_j} = 0,\forall j \in \left[ {K + 1,2K} \right],j \ne K + i,
\end{aligned}} \right.
\label{eq2}  
\end{equation}
where the first term denotes the desired signal from the intended pair (i.e., user $K+i$), and the second and third terms represent the interference from same-side and cross-side users, respectively.
Given the reciprocity of these channels (i.e., ${h_{ij}} = {h_{ji}}$, ${{\bf{a}}_{ij}} = {{\bf{a}}_{ji}}$), each side (left or right) has $\frac{K(K-1)}{2}$ distinct same-side interference nulling equations among its users, while between the two sides, there are $K\left( {K - 1} \right)$ distinct cross-interference nulling equations.

Therefore, these conditions for achieving interference-free transmission can be transformed into a feasibility problem as
\begin{equation}
\begin{gathered}
{\text{find }}{\mathbf{v}} \hfill \\
{\text{s}}{\text{.t}}{\text{. }}{\mathbf{v}} \in {S_1} \cap {S_2} \hfill, \\ 
\end{gathered}
\label{eq3}
\end{equation}
where
\begin{equation}
{S_1} = \left\{ {{\bf{v}}\left| {\left\{ {\begin{array}{*{20}{c}}
{{\bf{A}}_1^H{\bf{v}} + {{\bf{b}}_1} = 0}\\
{{\bf{A}}_2^H{\bf{v}} + {{\bf{b}}_2} = 0}
\end{array}} \right.} \right.} \right\} \buildrel \Delta \over = \left\{ {{\bf{v}}\left| {{{\bf{A}}^H}{\bf{v}} + {\bf{b}} = 0} \right.} \right\}
\label{eq4}, 
\end{equation}

\begin{equation}
{S_2} = \left\{ {\left. {\bf{v}} \right|\left| {{v_i}} \right| = 1,\forall i = 1, \cdots ,N} \right\}
\label{eq5},  
\end{equation}
and ${\bf{A}} = [{{\bf{A}}_1},{{\bf{A}}_2}] \in {\mathbb{C}^{N \times 2K\left( {K - 1} \right)}}$, ${\bf{b}} = {[{\bf{b}}_1^T,{\bf{b}}_2^T]^T} \in {\mathbb{C}^{2K\left( {K - 1} \right) \times 1}}$,
\begin{equation}
{{\mathbf{A}}_1} = \left[ {{{\mathbf{a}}_{12}}, \cdots ,{{\mathbf{a}}_{(K - 1)K}},{{\mathbf{a}}_{K + 1,(K + 2)}} \cdots ,{{\mathbf{a}}_{\left( {2K - 1} \right),2K}}} \right] ,
\nonumber
\end{equation}
\begin{equation}
{{\mathbf{A}}_2} = \left[ {{{\mathbf{a}}_{1,(K + 2)}}, \cdots ,{{\mathbf{a}}_{1,2K}}, \cdots ,{{\mathbf{a}}_{K,K + 1}} \cdots ,{{\mathbf{a}}_{K,2K - 1}}} \right] ,
\nonumber
\end{equation}
\begin{equation}
{{\mathbf{b}}_1} = {\left[ {{h_{12}}, \cdots ,{h_{(K - 1)K}},{h_{K + 1,(K + 2)}} \cdots ,{h_{\left( {2K - 1} \right),2K}}} \right]^T} ,
\nonumber
\end{equation}
\begin{equation}
{{\mathbf{b}}_2} = {\left[ {{h_{1,(K + 2)}}, \cdots ,{h_{1,2K}}, \cdots ,{h_{K,K + 1}} \cdots ,{h_{K,2K - 1}}} \right]^T} .
\nonumber
\end{equation}
Each column of ${{\mathbf{A}}_1} \in {\mathbb{C}^{N \times K\left( {K - 1} \right)}}$ and ${{\mathbf{A}}_2} \in {\mathbb{C}^{N \times K\left( {K - 1} \right)}}$ is independently distributed as ${{\mathbf{a}}_{ij}} \sim \mathcal{C}\mathcal{N}\left( {0,\sigma _1^2{{\mathbf{I}}_N}} \right)$. And ${{\bf{b}}_1} \sim {\cal C}{\cal N}\left( {0,\sigma _2^2{{\bf{I}}_{K\left( {K - 1} \right)}}} \right)$, ${{\bf{b}}_2} \sim {\cal C}{\cal N}\left( {0,\sigma _3^2{{\bf{I}}_{K\left( {K - 1} \right)}}} \right)$. 

Due to the randomness of the cascaded links ${\bf{a}}_{ij}$ and the direct links $h_{ij}$, problem \eqref{eq3} can be solved when the dimension of ${\mathbf{v}}$ (i.e., $N$) is sufficiently large \cite{bafghi2022degrees,jiang2022interference}. To this end, we employ the alternating projection algorithm \cite{jiang2022interference} to solve it as follows
\begin{equation}
\left\{ {\begin{aligned}
	&{{\mathbf{\hat v}} = {{\mathbf{v}}^{\left( t \right)}} - {\mathbf{A}}{{\left( {{{\mathbf{A}}^H}{\mathbf{A}}} \right)}^{ - 1}}\left( {{{\mathbf{A}}^H}{{\mathbf{v}}^{\left( t \right)}} + {\mathbf{b}}} \right)} \\ 
	&{{\mathbf{v}}^{\left( {t + 1} \right)}} = {{{\mathbf{\hat v}}.} \mathord{\left/
 {\vphantom {{{\mathbf{\hat v}}.} {\left\| {{\mathbf{\hat v}}} \right\|}}} \right.
 \kern-\nulldelimiterspace} {\left\| {{\mathbf{\hat v}}} \right\|}} 
	\end{aligned}} \right.
\label{eq6},
\end{equation}
where ${{\mathbf{v}}^{\left( t \right)}}$ represents the vector ${\mathbf{v}}$ at the $t$-th iteration, and the second step is the element-wise normalization. The convergence of this algorithm has been proven in \cite{jiang2022interference} and is not repeated here.

After obtaining the optimal $\mathbf{v}$, we adopt the DoF as the performance metric for interference elimination, as it reflects the total number of interference-free channels the system can support simultaneously and serves as a first-order approximation of the system capacity \cite{6197715}. The DoF is given by
\begin{equation}
    {\rm{DoF}} = \sum\limits_{i = 1}^{2K} {\mathop {\lim }\limits_{{\rm{SN}}{{\rm{R}}_i} \to \infty } \frac{{{R_i}}}{{\log \left( {{\rm{SN}}{{\rm{R}}_i}} \right)}}}
\label{eq7},
\end{equation}
where ${{\rm{SN}}{{\rm{R}}_i}}$ is the received SNR of user $i$, and the rate ${R_i}$ of user $i$ ($i \in \left[ {1,K} \right]$) is given by 
\begin{equation}
{R_i} = \log \left( {1 + \frac{{{P_i}{{\left| {{\mathbf{a}}_{i\left( {K + i} \right)}^H{\mathbf{v}} + {h_{i\left( {K + i} \right)}}} \right|}^2}}}{{\sum\limits_{{j \in [1,2K],} \atop{j \ne i,K + i} } {{P_j}{{\left| {{\mathbf{a}}_{ij}^H{\mathbf{v}} + {h_{ij}}} \right|}^2}}  + {\sigma ^2}}}} \right)
\label{eq8}
\end{equation}     
The rate for users $i \in \left[ {K+1,2K} \right]$ is similar. Therefore, when interference is completely eliminated, a \textit{full DoF} of $2K$ can be achieved for the two-way $K$-user channel.

\subsection{Necessary Condition on REs for Full DoF}
\label{Necessary}
Considering that ${S_2}$ is a subset of the sphere and ${S_1}$ denotes the solution space for the interference nulling equations, if the radius of the sphere is less than the distance from its center to this solution space, the sphere cannot intersect ${S_1}$. In this subsection, we derive the necessary $N$ for achieving the full DoF based on this geometric property.

Specifically, each point ${\mathbf{v}}$ in the set ${S_2}$ satisfies ${{\mathbf{v}}^H}{\mathbf{v}} = N$, indicating that ${S_2}$ lies on the complex sphere of radius $r=\sqrt N$. 
The set ${S_1}$ can be equivalently expressed as
\begin{equation}
{S_1}:{{\mathbf{A}}^H}{\mathbf{v}}{\text{ + }}{\mathbf{b}} = 0 \Rightarrow {\mathbf{v}} =  - {\left( {{{\mathbf{A}}^H}} \right)^ + }{\mathbf{b}} + {{\mathbf{v}}_0}
\label{eq9},
\end{equation}
where ${\left( {{{\mathbf{A}}^H}} \right)^ + } = {\mathbf{A}}{\left( {{{\mathbf{A}}^H}{\mathbf{A}}} \right)^{ - 1}}$ and ${{\mathbf{v}}_0} \in {\text{null}}\left( {{{\mathbf{A}}^H}} \right)$. Thus, ${S_1}$ can be viewed as an affine subspace obtained by translating the null space of ${\bf{A}}^H$ by the vector ${\left( {{{\mathbf{A}}^H}} \right)^ + }{\mathbf{b}} \in {\mathbb{C}^N}$. The translation distance\footnote{The translation distance of ${\text{null}}\left( {{{\mathbf{A}}^H}} \right)$ is defined as the norm of the projection of the translation vector onto the orthogonal complement of ${\text{null}}\left( {{{\mathbf{A}}^H}} \right)$, which is spanned by the columns of $\mathbf{A}$. Thus, ${\text{Pro}}{{\text{j}}_{{{\left( {{\text{null}}\left( {{{\mathbf{A}}^H}} \right)} \right)}^ \bot }}}\left(  \cdot  \right) = {\mathbf{A}}{\left( {{{\mathbf{A}}^H}{\mathbf{A}}} \right)^{ - 1}}{{\mathbf{A}}^H}\left(  \cdot  \right)$.} $d$ is given by 
\begin{equation}
\begin{aligned}
d &= \left| {{\text{Pro}}{{\text{j}}_{{{\left( {{\text{null}}\left( {{{\mathbf{A}}^H}} \right)} \right)}^ \bot }}}\left( {{{\left( {{{\mathbf{A}}^H}} \right)}^ + }{\mathbf{b}}} \right)} \right| \hfill \\
&= \left| {{\mathbf{A}}{{\left( {{{\mathbf{A}}^H}{\mathbf{A}}} \right)}^{ - 1}}{{\mathbf{A}}^H} \cdot \left( {{\mathbf{A}}{\left( {{{\mathbf{A}}^H}{\mathbf{A}}} \right)^{ - 1}}{\mathbf{b}}} \right)} \right| \hfill  
= \left| {{{\left( {{{\mathbf{A}}^H}} \right)}^ + }{\mathbf{b}}} \right| \hfill . 
\end{aligned}   
\end{equation}

To determine a necessary condition on $N$ for the solvability of problem \eqref{eq3}, we establish the following theorem.
\begin{theorem}
Let
${{\bar S}_1} = \left\{ {\left. {{\mathbf{x}} \in {\mathbb{C}^{N}}} \right|{\mathbf{x}} = {{\mathbf{x}}_0} + {\mathbf{z}},{\mathbf{z}} \in {\text{null}}\left( {\mathbf{G}} \right)} \right\}$, \\
${{\bar S}_2} = \left\{ {\left. {{\mathbf{x}} \in {\mathbb{C}^{N}}} \right|\left| {{x_i}} \right| = \gamma ,\forall i = 1, \cdots ,N} \right\}$, 
where ${\mathbf{G}} \in {\mathbb{C}^{L \times N}}$ and ${{\mathbf{x}}_0} \in {\mathbb{C}^{N}}$ are given, and $L \leqslant N$.
If $\left| {{\text{Pro}}{{\text{j}}_{{{\left( {{\text{null}}\left( {\mathbf{G}} \right)} \right)}^ \bot }}}\left( {{{\mathbf{x}}_0}} \right)} \right|  >  \gamma\sqrt N $, then ${{\bar S}_1} \cap {{\bar S}_2} = \varnothing  $
\label{the1}
\end{theorem}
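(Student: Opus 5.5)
The plan is a short geometric contrapositive: assume some $\mathbf{x}\in\bar S_1\cap\bar S_2$ exists and show that $\left|\text{Proj}_{(\text{null}(\mathbf{G}))^\perp}(\mathbf{x}_0)\right|\leqslant\gamma\sqrt N$, which contradicts the hypothesis.

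The argument rests on two facts.

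\emph{First}, every point of $\bar S_2$ has the same Euclidean norm. If $|x_i|=\gamma$ for all $i$, then
$\|\mathbf{x}\|^2=\sum_{i=1}^N|x_i|^2=N\gamma^2$, so $\bar S_2$ lies on the complex sphere of radius $\gamma\sqrt N$ centered at the origin.

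\emph{Second}, every point of $\bar S_1$ has the same component in $(\text{null}(\mathbf{G}))^\perp$. Let $\mathbf{P}$ denote the orthogonal projector onto $(\text{null}(\mathbf{G}))^\perp$, which is the row space of $\mathbf{G}$, i.e., the range of $\mathbf{G}^H$. For $\mathbf{x}=\mathbf{x}_0+\mathbf{z}$ with $\mathbf{z}\in\text{null}(\mathbf{G})$ we have $\mathbf{P}\mathbf{z}=0$. Hence $\mathbf{P}\mathbf{x}=\mathbf{P}\mathbf{x}_0$ for all $\mathbf{x}\in\bar S_1$, and the distance from the origin to the affine set $\bar S_1$ is exactly $\|\mathbf{P}\mathbf{x}_0\|$.

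To combine them, use the Pythagorean decomposition $\mathbf{x}=\mathbf{P}\mathbf{x}+(\mathbf{I}-\mathbf{P})\mathbf{x}$ in $\mathbb{C}^N$ with the standard Hermitian inner product. The two components are orthogonal, so
$\|\mathbf{x}\|^2=\|\mathbf{P}\mathbf{x}_0\|^2+\|(\mathbf{I}-\mathbf{P})\mathbf{x}\|^2\geqslant\|\mathbf{P}\mathbf{x}_0\|^2$.
If $\mathbf{x}$ also lies in $\bar S_2$, the left side equals $N\gamma^2$. This gives $\|\mathbf{P}\mathbf{x}_0\|\leqslant\gamma\sqrt N$, contradicting the assumption $\|\mathbf{P}\mathbf{x}_0\|>\gamma\sqrt N$. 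Therefore $\bar S_1\cap\bar S_2=\varnothing$.

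The hypothesis $L\leqslant N$ is not needed for this argument. When $\mathbf{G}$ has full row rank, $\mathbf{P}=\mathbf{G}^H(\mathbf{G}\mathbf{G}^H)^{-1}\mathbf{G}$, which links the result to the expression for $d$ computed above with $\mathbf{G}=\mathbf{A}^H$.

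There is no real obstacle. The one point to handle carefully is that the projection is orthogonal in the complex (Hermitian) sense, so that the cross term $2\,\mathrm{Re}\langle\mathbf{P}\mathbf{x},(\mathbf{I}-\mathbf{P})\mathbf{x}\rangle$ vanishes. This follows from $\mathbf{P}=\mathbf{P}^H=\mathbf{P}^2$.
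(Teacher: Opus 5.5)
Your proof is correct and is essentially the paper's argument: the paper likewise places $\bar S_2$ inside the sphere of radius $\gamma\sqrt N$. It then writes each $\mathbf{x}\in\bar S_1$ as $\text{Proj}_{(\text{null}(\mathbf{G}))^\perp}(\mathbf{x}_0)$ plus an orthogonal component in $\text{null}(\mathbf{G})$, and uses the Pythagorean identity to conclude that neither the sphere nor $\bar S_2$ can meet $\bar S_1$.
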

\begin{proof}
	Please see Appendix~\ref{appA}.
\end{proof}

One geometric interpretation of Theorem~\ref{the1} is that if the distance from the affine subspace to the origin exceeds the radius of the sphere, then it cannot intersect the sphere, and consequently cannot intersect any subset lying on the sphere.

Therefore, by applying Theorem~\ref{the1}, we have that if 
\begin{equation}
\sqrt N  < \left| {{{\left( {{{\mathbf{A}}^H}} \right)}^ + }{\mathbf{b}}} \right|
\label{eq10}, 
\end{equation}
then ${S_1} \cap {S_2} = \varnothing$. 
Given the randomness of ${\mathbf{A}}$ and ${\mathbf{b}}$, ${{{\left( {{{\bf{A}}^H}} \right)}^ + }{\bf{b}}}$ 
is also a random vector. According to the norm concentration in high-dimensional statistics \cite{vershynin2018high}, when the dimension of vector ${{{\left( {{{\bf{A}}^H}} \right)}^ + }{\bf{b}}}$ is large, the distribution of its norm will be highly concentrated around its expected value. And the expected value of $\left| {{{\left( {{{\bf{A}}^H}} \right)}^ + }{\bf{b}}} \right|$ can be calculated as
\begin{equation}
\begin{aligned}
E\left( {\left| {{{\left( {{{\bf{A}}^H}} \right)}^ + }{\bf{b}}} \right|} \right) & \mathop  \approx \limits^{\left( a \right)} \sqrt {E\left( {{{\left( {{{\left( {{{\mathbf{A}}^H}} \right)}^ + }{\mathbf{b}}} \right)}^H}{{\left( {{{\mathbf{A}}^H}} \right)}^ + }{\mathbf{b}}} \right)}  \\
 &= \sqrt {E\left( {\text{tr}\left( {{\bf{b}}{{\bf{b}}^H}{{\left( {{{\bf{A}}^H}{\bf{A}}} \right)}^{ - 1}}} \right)} \right)} \\
 &\mathop  = \limits^{(b)}  \sqrt {\frac{{K\left( {K - 1} \right)\left( {\eta _1^2 + \eta _2^2} \right)}}{{N - 2K\left( {K - 1} \right) - 1}}}, 
\end{aligned}
\label{eq11}    
\end{equation}
where ${\eta _1} = \frac{{{\sigma _2}}}{{{\sigma _1}}}$ and ${\eta _2} = \frac{{{\sigma _3}}}{{{\sigma _1}}}$ are the strength ratios of the same-side and cross-side direct links to the cascaded reflective link, respectively. The approximation (a) follows from the concentration of $\left| {{{\left( {{{\bf{A}}^H}} \right)}^ + }{\bf{b}}} \right|$ in the large-dimensional regime
\cite{vershynin2018high}. The equality (b) follows from the fact that ${{{\left( {{{\mathbf{A}}^H}{\mathbf{A}}} \right)}^{ - 1}}}$ follows an inverse Wishart distribution.

Consequently, the inequality \eqref{eq10} can be transformed as
\begin{equation}
\sqrt N  < \sqrt {\frac{{K\left( {K - 1} \right)\left( {\eta _1^2 + \eta _2^2} \right)}}{{N - 2K\left( {K - 1} \right) - 1}}}
\label{eq12}.   
\end{equation}
It can be reformulated as
\begin{equation}
\begin{aligned}
N < &\frac{1}{2}\sqrt {{{\left( {2K\left( {K - 1} \right) + 1} \right)}^2} + 4K\left( {K - 1} \right)\left( {\eta _1^2 + \eta _2^2} \right)} \\
 &+ \frac{1}{2}\left( {2K\left( {K - 1} \right) + 1} \right)\triangleq {N_\text{nec}}
\end{aligned}
\label{eq13}.   
\end{equation}
Furthermore, this necessary threshold ${N_\text{nec}}$ can be asymptotically characterized as
\begin{equation}
    {N_\text{nec}} = \left\{ {\begin{aligned}
&O\left( {{K^2}} \right),\;{\text{if}}\:\eta < K\\
&O\left( {K\eta } \right),\;{\text{if}}\:\eta > {K}
\end{aligned}} \right.
\label{eq14},
\end{equation}
where $\eta  = \sqrt {\eta _1^2 + \eta _2^2}$, $\eta_1$ and $\eta_2$ can be
taken as the strength ratios of the weakest same-side and cross-side direct links to the strongest cascaded reflective link, respectively.
If the number of passive REs is below this threshold, problem \eqref{eq3} is unlikely to have a solution (i.e., ${S_1} \cap {S_2} = \varnothing$). Therefore, this necessary condition can also serve as a lower bound on the required number of passive REs to achieve the full DoF.

\subsection{Sufficient Condition on RE for Full DoF}
\label{Sufficient}
We have shown that if the sphere radius $r$ is smaller than the distance $d$ from its center to $S_1$, then the sphere does not intersect ${S_1}$, resulting in ${S_1} \cap {S_2} = \varnothing$. However, when $r$ slightly exceeds $d$, the sphere can intersect ${S_1}$, but ${S_2}$ may still not, as it is merely a subset of the sphere.
By how much must $r$ exceed $d$ to ensure that ${S_2}$ intersects ${S_1}$? This subsection aims to address this question and derive a sufficient condition.

To gain insight into the relationship between ${S_1}$ and ${S_2}$, a graphical illustration is provided in Fig. \ref{d}, where the red dots represent the set ${S_2}$ and the blue plane represents the set ${S_1}$, all of them lie in the high-dimensional complex space.
As shown in Fig. \ref{d}, when the complex sphere intersects the solution space ${S_1}$, their cross-section forms a complex circle (or more generally, a hyperspherical section, depending on the dimensionality of ${S_1}$ and the sphere). 
This cross-section (denoted by ${A'}{O'}{B'}$) divides the sphere into two regions: the yellow spherical cap above and the remaining part. The two endpoints ${A'}$ and ${B'}$ of the diameter of the cross-section form an angle $\theta $ with the origin. Thus, the relationship between $r$ and $d$ can be characterized by the central angle $\theta $ as follows
\begin{equation}
\cos \left( {\frac{\theta }{2}} \right) = \frac{d}{r}.
\label{eq15}   
\end{equation}

\begin{figure}
\centering
\subfigure[$\theta  \ll \frac{\pi }{2}$]{
\includegraphics[width=0.35\linewidth]{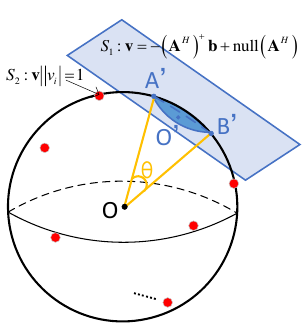}
\label{d1}
}
\quad
\subfigure[$\theta  = \frac{\pi }{2}$]{
\includegraphics[width=0.35\linewidth]{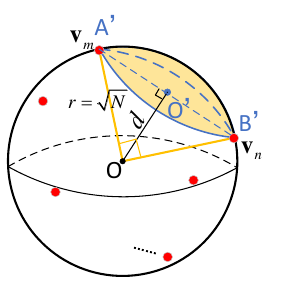}
\label{d2}
}

\caption{Intersection between $S_1$ and the complex sphere ($r>d$). }
\label{d}
\end{figure}

Furthermore, if we take any two points from the set ${S_2}$, such as ${{\bf{v}}_m}$ and ${{\bf{v}}_n}$, then we have
\begin{equation}
\begin{aligned}
E\left( {{\mathbf{v}}_n^H{{\mathbf{v}}_m}} \right) &= E\left( {\sum\limits_{i = 1}^N {v_{n,i}^*{v_{m,i}}} } \right) = NE\left( {v_{n,i}^*{v_{m,i}}} \right) \hfill \\
  &\mathop  = \limits^{(a)} NE\left( {v_{n,i}^*} \right)E\left( {{v_{m,i}}} \right) \hfill \\
  &\mathop  = \limits^{(b)} N\int_0^{2\pi } {\frac{1}{{2\pi }}{e^{j{\varphi _n}}}d{\varphi _n}}  \cdot \int_0^{2\pi } {\frac{1}{{2\pi }}{e^{j{\varphi _m}}}d{\varphi _m}}= 0 \hfill \\ 
\end{aligned}
\label{eq16},  
\end{equation}
where $\left( a \right)$ follows from the fact that the entries of $\mathbf{v} \in S_2$ are independent, and $\left( b \right)$ follows from the fact that the phase $\varphi_i$ of each element ${v_i}$ in ${\mathbf{v}} \in {S_2}$ takes values uniformly\footnote{Although RIS phase shifts should be carefully designed for interference nulling, for feasibility analysis we treat each element of ${\mathbf{v}}\in S_2$ as a generic phase uniformly distributed over $\left[0,2\pi\right)$ to characterize the geometry of the unit-modulus set $S_2$. Interference-free transmission exists if and only if ${S_1}\cap S_2 \ne \varnothing$, which depends only on the relative geometry of the two sets rather than any specific phase-shift design.}
over $\left[ {0,2\pi } \right)$ \cite{zhu2024robust}.
According to the law of large numbers, equation \eqref{eq16} implies that as $N \to \infty $,  
any two distinct vectors in the set ${S_2}$ are orthogonal.
Therefore,  when the endpoints (${A'}$ and ${B'}$) of the cross-section's diameter coincides with any two points ${{\bf{v}}_m}$ and ${{\bf{v}}_n}$, as shown in Fig. \ref{d2}, we have $\theta  = \frac{\pi }{2}$. In this case, the relationship between $r$ and $d$ can be characterized as 
\begin{equation}
d = \cos \left( {\frac{\pi }{4}} \right)r = \frac{r}{{\sqrt 2 }}
\label{eq17}. 
\end{equation}

To determine a sufficient condition on $N$ to achieve the full DoF, we establish the following theorem.
\begin{theorem}
If $d \leqslant \frac{r}{{\sqrt 2 }}$ or $\theta    \geqslant   \frac{\pi }{2}$, then the cross-section of the sphere and $S_1$ must pass through at least one point of the set ${S_2}$, i.e., ${S_1} \cap {S_2} \ne \varnothing $.
\label{the2}
\end{theorem}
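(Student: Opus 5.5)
The plan is to turn the geometric picture of Fig.~\ref{d} into a covering argument, which can only be made rigorous in a high-probability, large-$N$ sense, and to lean on the near-orthogonality of points of $S_2$ established in \eqref{eq16}.

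\textbf{Step 1: setup.} Let $\mathcal{C}=S_1\cap\{\mathbf v:\|\mathbf v\|=r\}$ denote the cross-section, with $r=\sqrt N$. Its center is $\mathbf c=-(\mathbf A^H)^+\mathbf b$, with $\|\mathbf c\|=d$. It is a sphere of radius $\rho=\sqrt{r^2-d^2}$ lying in the affine space $\mathbf c+\mathrm{null}(\mathbf A^H)$, whose complex dimension is $N-2K(K-1)$.

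\textbf{Step 2: shrinking the angular distance to $S_2$.} For any unit-modulus $\mathbf v\in S_2$, write $\mathbf v=\mathbf P\mathbf v+\mathbf P^\perp\mathbf v$, where $\mathbf P$ projects onto $\mathrm{null}(\mathbf A^H)$.
\begin{itemize}
\item By the same concentration used in \eqref{eq11}, for a generic $\mathbf v$ (uniform phases) we have $\|\mathbf P^\perp\mathbf v\|^2\approx 2K(K-1)$, which is much smaller than $N$.
\item The point of $\mathcal C$ closest to $\mathbf v$ is $\mathbf v^\star=\mathbf c+\rho\,\mathbf P\mathbf v/\|\mathbf P\mathbf v\|$.
\item Its angle to $\mathbf v$ is governed by $\mathrm{Re}(\mathbf v^H\mathbf v^\star)/r^2$.
\end{itemize}
The condition $d\le r/\sqrt2$ gives $\rho\ge r/\sqrt2$, i.e. $\theta\ge\pi/2$. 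This says the cross-section subtends at least a right angle at the origin. By \eqref{eq16}, distinct points of $S_2$ are asymptotically orthogonal, so angular separation $\pi/2$ is exactly the typical spacing of $S_2$. Any circle on $\mathcal C$ of angular diameter at least $\pi/2$ therefore reaches "at least as far" as the gap between two points of $S_2$, so some point of $S_2$ must fall on it. This is the intended reading of Fig.~\ref{d2}.

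\textbf{Step 3: making the covering statement rigorous.} I would replace the heuristic by a fixed-point/projection argument: run the alternating projection \eqref{eq6} and show the iteration contracts when $d\le r/\sqrt2$.
\begin{itemize}
\item Bound $\|\mathbf v^{(t+1)}-\hat{\mathbf v}\|$ using that projecting onto $S_2$ moves a point by at most its distance to the nearest unit-modulus vector.
\item Show that the residual $\|\mathbf A^H\mathbf v^{(t)}+\mathbf b\|$ decreases geometrically.
\item For the contraction factor, use concentration of $\mathbf A^H\mathbf A\approx N\sigma_1^2\mathbf I$ and of $\mathbf P^\perp$ acting on unit-modulus vectors.
\end{itemize}
A limit point then lies in $S_1\cap S_2$, since $S_2$ is compact.

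\textbf{Main obstacle.} The hardest step is Step 3, specifically showing that the threshold $\theta=\pi/2$ (rather than some other constant) yields a contraction factor below one. The orthogonality in \eqref{eq16} is an expectation statement, not a uniform one over $S_2$. I would first try to bound the contraction factor by $\sqrt{1-\rho^2/r^2}\cdot(1+o(1))\le 1/\sqrt2+o(1)$ via the central-angle relation \eqref{eq15}. Failing that, I would accept the statement in the asymptotic, high-probability sense in which \eqref{eq16} and \eqref{eq17} are used.
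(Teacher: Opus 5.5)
Your proposal has a genuine gap: neither Step~2 nor Step~3 actually produces a point of $S_1\cap S_2$.

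In Step~2, \eqref{eq16} only says that two \emph{independent}, uniformly phased vectors are nearly orthogonal. It does not make $S_2$ (a real $N$-dimensional torus) a discrete set of points spaced at angle $\pi/2$. Moreover, $\mathcal C$ has real codimension $4K(K-1)$ in the sphere. So an angular-width condition on $\mathcal C$ does not by itself force it to meet $S_2$.

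Step~3, where the proof would have to happen, is only a plan, and its intended inputs are not available. The estimate $\|\mathbf P^\perp\mathbf v\|^2\approx 2K(K-1)$ holds for $\mathbf v$ independent of $(\mathbf A,\mathbf b)$, but the iterates of \eqref{eq6} are not independent of them. Worse, the vectors you are looking for are precisely atypical: every $\mathbf v^\star\in S_1\cap S_2$ satisfies $\mathbf P^\perp\mathbf v^\star=\mathbf c$, hence $\|\mathbf P^\perp\mathbf v^\star\|=d$, which near the threshold $d\approx r/\sqrt2$ is of order $\sqrt N$. A uniform bound over $S_2$ does not help either: for a column $\mathbf a$ of $\mathbf A$, the choice $v_n=a_n/|a_n|$ gives $\|\mathbf P^\perp\mathbf v\|\ge|\mathbf a^H\mathbf v|/\|\mathbf a\|=\|\mathbf a\|_1/\|\mathbf a\|_2\approx\sqrt{\pi N}/2$. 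Nothing in the plan ties a contraction factor to $\cos(\theta/2)=d/r$, and alternating projections onto the nonconvex set $S_2$ can stall at points outside $S_1$. Your fallback of accepting the statement ``in the asymptotic sense'' concedes the theorem rather than proving it.

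In fact, no argument that treats $(\mathbf A,\mathbf b)$ as fixed can work, because the implication is false for fixed data. Take the single equation $v_1+0.1v_2+0.5=0$ in $\mathbb C^2$, i.e.\ $\mathbf a=(1,0.1)^T$, $b=0.5$. Then $d=0.5/\sqrt{1.01}\approx0.497\le r/\sqrt2=1$, so $\theta\ge\pi/2$. Yet $|v_1+0.1v_2|\ge0.9$ for all unimodular $v_1,v_2$, so $S_1\cap S_2=\varnothing$. Appending columns supported on further coordinates, with zero entries in $\mathbf b$, leaves $d$ unchanged and gives counterexamples of any size.

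So the criterion can only hold with high probability. The randomness must be used to control the whole image $\mathbf A^H S_2=\{\sum_n v_n\mathbf r_n\}$ (with $\mathbf r_n$ the $n$th column of $\mathbf A^H$), not typical points of $S_2$. For instance, for $N\gg 2K(K-1)$ the support function of this image in a unit direction $\mathbf u$ is $\sum_n|\mathbf u^H\mathbf r_n|\approx N\sigma_1\sqrt\pi/2$. Meanwhile $d\le r/\sqrt2$ gives $\|\mathbf b\|=\|\mathbf A^H\mathbf c\|\approx\sigma_1\sqrt N\,d\le\sigma_1N/\sqrt2$, so $-\mathbf b$ lies well inside the convex hull. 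The real work is passing from the hull to $\mathbf A^H S_2$ itself.

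For comparison, the paper argues by contradiction:
\begin{enumerate}
\item The cap of central angle $\theta\ge\pi/2$ is taken to contain some $\mathbf v_m\in S_2$.
\item The segment from $O$ to $\mathbf v_m$ meets $S_1$ at $s'=\mu\mathbf v_m$ with $0<\mu<1$, so $S_1$ meets the open polydisc $\{|v_i|<1\ \forall i\}$.
\item A coordinatewise translation argument (Fig.~\ref{fig4}) is then used to reach $|v_i|=1$ for all $i$.
\end{enumerate}
Your Step~2 is the same first move, and your Step~3 tries to replace the radial/coordinatewise step. You cannot close your gap by borrowing that step. In the example above, $S_1$ contains $(-0.5,0)$, inside the open polydisc, and each coordinate of $S_1$ ranges over all of $\mathbb C$, yet $S_1\cap S_2=\varnothing$.

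What does follow rigorously from $S_1$ meeting the closed polydisc is weaker. Move along a complex line in $S_1$ that fixes the already unimodular coordinates until one more coordinate reaches modulus one, and repeat. This yields a point of $S_1$ with at least $N-2K(K-1)$ unimodular entries, not all $N$.
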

\begin{proof}
Please see Appendix \ref{appB}.
\end{proof}
Therefore, by Theorem~\ref{the2}, the sufficient condition on $N$ for ${{S}_1} \cap {{S}_2} \ne \varnothing $ can be derived by
\begin{equation}
\begin{aligned}
r \geqslant \sqrt 2 d \Leftrightarrow \sqrt N  \geqslant \sqrt {\frac{{2K\left( {K - 1} \right)\left( {\eta _1^2 + \eta _2^2} \right)}}{{N - 2K\left( {K - 1} \right) - 1}}} .
\end{aligned}
\label{eq21} 
\end{equation}
It can be reformulated as
\begin{equation}
\begin{aligned}
N \geqslant &\frac{1}{2}\sqrt {{{\left( {2K\left( {K - 1} \right) + 1} \right)}^2} + 8K\left( {K - 1} \right)\left( {\eta _1^2 + \eta _2^2} \right)} \\
 &+ \frac{1}{2}\left( {2K\left( {K - 1} \right) + 1} \right) \triangleq {N_\text{suf}}
\end{aligned}
  \label{eq22}.  
\end{equation}
This sufficient threshold ${N_\text{suf}}$ for the required number of passive REs can also be asymptotically characterized as
\begin{equation}
    {N_\text{suf}} = \left\{ {\begin{aligned}
&O\left( {{K^2}} \right),\;{\text{if}}\:\eta < K\\
&O\left( {K\eta } \right),\;{\text{if}}\:\eta > {K}
\end{aligned}} \right.
,
\end{equation}
where $\eta  = \sqrt {\eta _1^2 + \eta _2^2}$, $\eta_1$ and $\eta_2$ 
can be taken as the ratios of the strongest same-side and cross-side direct links to the weakest cascaded reflective link, respectively.
If the number of passive REs $N$ exceeds this threshold, the feasibility of problem \eqref{eq3} is ensured. Therefore, this sufficient condition can serve as an upper bound on the REs to achieve the full DoF.

\section{Degrees of Freedom under Imperfect CSI}
\label{Law}
In the previous sections, we investigated the required number of passive REs to achieve interference-free transmission (i.e., full DoF) under perfect CSI. In practice, however, acquiring accurate RIS-related CSI is challenging due to the passive nature of the REs, which leads to residual interference and non-negligible performance degradation. 
To characterize this degradation, we formulate a sum-rate maximization problem and derive the corresponding DoF via the Squeeze Theorem by analyzing the upper and lower bounds.

The preceding analysis assumed Rayleigh fading for tractability, as the phase transition in the required number of REs for interference nulling is independent of the channel model \cite{jiang2022interference}. In this section, to better capture the impact of the imperfect CSI in practical scenarios, we model the RIS-related channels (${\mathbf{h}}_{Ii}$) as Rician fading to account for line-of-sight components, while the direct user-to-user links (${h_{ij}}$) remain Rayleigh fading due to extensive scattering \cite{pan2020multicell}.
Specifically, 
\begin{equation}
{h_{ij}} \sim \mathcal{C}\mathcal{N}\left( {0,{L_{ij}}} \right)
\label{eq_hij}, 
\end{equation}
\begin{equation}
{{\mathbf{h}}_{Ii}} = \sqrt {{L_{Ii}}} \left( {\sqrt {\frac{\varepsilon }{{\varepsilon  + 1}}} {{{\mathbf{\bar h}}}_{Ii}}^{{\text{LoS}}} + \sqrt {\frac{1}{{\varepsilon  + 1}}} {{{\mathbf{\tilde h}}}_{Ii}}^{{\text{NLoS}}}} \right)
\label{eq_hIi},
\end{equation}
where $L$ is the distance-dependent path loss, $\varepsilon$ is the Rician factor, and ${{{{\mathbf{\bar h}}}_{Ii}}^{{\text{LoS}}}}$ and ${{{{\mathbf{\tilde h}}}_{Ii}}^{{\text{NLoS}}}}$ represent the line-of-sight (LoS) and non-LoS (NLoS) components, respectively, as in \cite{jiang2021learning}. 

\subsection{Sum Rate Maximization}
In practice, the direct channels (${h_{ij}}$) between users can be estimated by turning off the RIS. Thus, it is reasonable to assume perfect CSI for direct channels \cite{ZhouGui,zhou2020framework}. However, obtaining accurate CSI for the passive RIS-related channels (${{\mathbf{h}}_{Ii}}$) is challenging. Note that the cascaded channels (${{\mathbf{a}}_{ij}}$), defined as ${\mathbf{a}}_{ij}^H = {\mathbf{h}}_{Ii}^H\rm{diag}\left( {{{\mathbf{h}}_{Ij}}} \right)$, are sufficient for RIS phase shift design \cite{li2023performance,zhou2020framework}. We aggregate the RIS-related CSI errors into the cascaded channels \cite{gao2023robust,10795157}, modeled as
\begin{equation}
{{\mathbf{a}}_{ij}} = \varsigma {{{\mathbf{\hat a}}}_{ij}} + \sqrt {1 - {\varsigma ^2}} \Delta {{\mathbf{a}}_{ij}}
\label{eq_er},
\end{equation}
where ${{\mathbf{a}}_{ij}}$, ${{{\mathbf{\hat a}}}_{ij}}$, and $\Delta {{\mathbf{a}}_{ij}}$ denote the ideal channel, the estimated channel, and the channel estimation error, respectively. The accuracy factor $0 \leq \varsigma \leq 1$ characterizes the estimation quality, with $\varsigma=1$ denoting perfect estimation and $\varsigma=0$ denoting completely erroneous estimation \cite{zhu2024robust,9957104}.
The error term follows $\Delta {{\mathbf{a}}_{ij}} \sim \mathcal{C}\mathcal{N}\left( {0,\left( {{L_{Ii}}{L_{Ij}}} \right){{\mathbf{I}}_N}} \right)$, independent of $\hat{\mathbf{a}}_{ij}$, ensuring that the average channel power of ${\mathbf{a}}_{ij}$ remains ${{L_{Ii}}{L_{Ij}}}$ \cite{9957104,yang2021performance,vu2022performance}.

Therefore, by substituting \eqref{eq_er} into \eqref{S1}, the received signal of user $i \in \left[ {1,K} \right]$ under imperfect CSI can be expressed as
\begin{equation}
\begin{aligned}
&{{\tilde y}_i} = \underbrace {\left( {\varsigma {\mathbf{\hat a}}_{i\left( {K + i} \right)}^H{\mathbf{v}} + {h_{i\left( {K + i} \right)}}} \right){x_{\left( {K + i} \right)}} }_{{\text{Signal with estimated CSI}}} \\
&    + \underbrace {\sum\limits_{j \in [1,2K],\atop
j \ne i,K+i} {\left( {\varsigma {\bf{\hat a}}_{ij}^H{\bf{v}} + {h_{ij}}} \right){x_j}}}_{{\text{Interference with estimated CSI}}} +\underbrace {\sum\limits_{j \in [1,2K], \atop j \ne i,K+i } {\sqrt {1 - {\varsigma ^2}} \Delta {\mathbf{a}}_{ij}^H{\mathbf{v}}{x_j}}}_{{\text{Interference with CSI error ${\left( {{n_\text{eq1}}} \right)}$}}}\\
&+\underbrace {\sqrt {1 - {\varsigma ^2}} \Delta {\mathbf{a}}_{i\left( {K + i} \right)}^H{\mathbf{v}}{x_{K + i}}}_{{\text{Signal with CSI error ${\left( {{n_\text{eq2}}} \right)}$}}}  + {n_i}.
\end{aligned}
\end{equation}
The terms associated with CSI errors can be treated as equivalent noise, since only their statistical characteristics are available \cite{li2023performance}. Based on Appendix~\ref{APP_p_eq}, the power of these equivalent noise is given by
\begin{equation}
{P_{{\text{eq}}}} = E\left( {{{\left| {{n_{{\text{eq1}}}} + {n_{{\text{eq2}}}}} \right|}^2}} \right) = \left( {1 - {\varsigma ^2}} \right)\sum\limits_{j \in [1,2K], \atop j \ne i} {{P_j}N {{L_{Ii}}{L_{Ij}}} }
\label{p_eq}.
\end{equation}

Therefore, the rate of user $i \in \left[ {1,K} \right]$ under imperfect CSI can be expressed as
\begin{equation}
{{\tilde R}_i} = {\log _2}\left( {1 + \frac{{{P_{K + i}}{{\left| {\varsigma {\mathbf{\hat a}}_{i\left( {K + i} \right)}^H{\mathbf{v}} + {h_{i\left( {K + i} \right)}}} \right|}^2}}}{{\sum\limits_{j \in [1,2K],\atop j \ne i,K+i} {{{\left| {\varsigma {\mathbf{\hat a}}_{ij}^H{\mathbf{v}} + {h_{ij}}} \right|}^2}{P_j}}+ {P_{{\text{eq}}}} + {\sigma ^2}}}} \right)
\label{rate}.
\end{equation}
The rate of user $i \in \left[ {K+1,2K} \right]$ follows a similar expression and is omitted for brevity.
Accordingly, the maximum achievable capacity of the system with imperfect CSI can be obtained by solving the following optimization problem:
\begin{equation}
\begin{aligned}
  \max{\text{ }} &R_{\text{sum}} =\sum\limits_{i = 1}^{2K}{{\tilde R}_i}  \hfill \\
  {\text{s}}{\text{.t}}{\text{. }}&\left| {{v_n}} \right| = 1 \hfill \\ 
\end{aligned}
\label{max}.
\end{equation}
It can be solved by manifold optimization using the Manopt toolbox \cite{boumal2023introduction}.

\subsection{Upper Bound on DoF}
The exact expression of the maximum sum rate is intractable, since the severe inter-user interference and the unit-modulus constraints of passive REs preclude a closed-form solution for the optimal RIS phase shifts. To address this challenge, we first leverage the results in subsection~\ref{Sufficient} to ensure complete interference elimination with the estimated CSI. Then, by exploiting the modulus-1 property of the passive REs together with statistical quantities, we establish an upper bound for the maximum ergodic sum rate, from which the DoF upper bound is subsequently derived based on its definition.

Specifically, since each user is equipped with only a single antenna, the interference aligned with the estimated CSI can only be mitigated through the RIS. Based on the previous analysis, perfect cancellation of this interference can be guaranteed if the number of passive REs satisfies
\begin{equation}
\begin{aligned}
N \geqslant &\frac{1}{2}\sqrt {{{\left( {2K\left( {K - 1} \right) + 1} \right)}^2} + 8K\left( {K - 1} \right)\left( {\bar\eta _1^2 + \bar\eta _2^2} \right)} \\
 &+ \frac{1}{2}\left( {2K\left( {K - 1} \right) + 1} \right) \triangleq {N_{{\text{null}}}}
\end{aligned}
\label{num_re}.
\end{equation}
${N_{{\text{null}}}}$ can be further characterized asymptotically as
\begin{equation}
    {N_\text{null}} = \left\{ {\begin{aligned}
&O\left( {{K^2}} \right),\;{\text{if}}\:\bar\eta < K\\
&O\left( {K\bar\eta } \right),\;{\text{if}}\:\bar\eta > {K}
\end{aligned}} \right.
\end{equation}
where $\bar\eta  = \sqrt {\bar\eta _1^2 + \bar\eta _2^2}$, ${\bar\eta _1} = \frac{{{\bar\sigma _2}}}{\varsigma{{\underline{\sigma}_1}}}$ and ${\bar\eta _2} = \frac{{{\bar\sigma _3}}}{{\varsigma{\underline{\sigma}_1}}}$.  
${{\bar \sigma }_2}$ and ${{\bar \sigma }_3}$ represent the path loss of the strongest same-side and cross-side direct links, respectively.  $\underline{\sigma}_1$ represents the weakest cascaded reflective link, and $\varsigma$ is the CSI accuracy factor.

Without loss of generality, we assume that each user has the same transmit power, i.e., ${P_i} = P,\forall i \in \left[ {1,2K} \right]$.  
If the number of REs satisfies \eqref{num_re}, the optimal ${\mathbf{v}}$ can cancel all interference while providing beamforming gain to maximize the sum rate \cite{jiang2022interference}. 
Consequently, the maximum sum rate is given by
\begin{equation}
\begin{aligned}    
{R} = \sum\limits_{i = 1}^{2K} {{{\log }_2}\left( {1 + \frac{{P{{\left| {\varsigma {\mathbf{\hat a}}_{i\left( {\left| {K \pm i} \right|} \right)}^H{{\mathbf{v}}_{{\text{opt}}}} + {h_{i\left( {\left| {K \pm i} \right|} \right)}}} \right|}^2}}}{{{P_{{\text{eq}}}} + {\sigma ^2}}}} \right)} 
\end{aligned}  
\label{sumrate},
\end{equation}
where ${{{\mathbf{v}}_{{\text{opt}}}}}$ denotes the optimal phase shift obtained by \eqref{max},
$\left| {K \pm i} \right|$ denotes the index of the paired user. Specifically, if $i \in \left[ {1,K} \right]$, then its desired pair is user $K+i$; otherwise, if $i \in \left[ {K+1,2K} \right]$, its desired pair is user $i-K$.

To analytically characterize the theoretical performance of the achievable rate, we adopt the ergodic rate \cite{zhi2022ris}
and approximate the strengths of all estimated cascaded reflective links by that of an extreme one\footnote{As shown later in Appendix \ref{sumrate_up} of this subsection, the upper bound of the ergodic ${R}$ increases monotonically with the channel gains of the cascaded reflective and cross-side direct links. Therefore, approximating these links by their extreme (strongest) ones is reasonable for deriving the upper bound.}, i.e., 
\begin{equation}
{\mathbf{\hat a}}_{i\left( {\left| {K \pm i} \right|} \right)}^H = {\mathbf{\hat h}}_{iI}^H{\text{diag}}\left( {{{{\mathbf{\hat h}}}_{I\left( {\left| {K \pm i} \right|} \right)}}} \right) \triangleq {{{\mathbf{\tilde a}}}_i},
\end{equation}
with
\begin{equation}
E\left( {{{\left| {{{\tilde a}_{in}}} \right|}^2}} \right) = {L_{Id}}{L_{Iu}} \triangleq L_a,
\end{equation}
where $L_{Iu}$ and $L_{Id}$ represent the strongest (or weakest) channel gains among all uplink (${{{{\mathbf{\hat h}}}_{I\left( {\left| {K \pm i} \right|} \right)}}}$) and downlink (${{{\mathbf{\hat h}}}_{iI}}$) reflective channels, respectively.
Accordingly, the equivalent noise power $P_{\text{eq}}$ given in \eqref{p_eq} can be rewritten as
\begin{equation}
{{\tilde P}_{{\text{eq}}}} = \left( {1 - {\varsigma ^2}} \right)\left( {2K - 1} \right)PN{L_a}
\label{eq_power}.
\end{equation}
Similarly, all cross-side direct links are approximated by an extreme one, i.e.,
\begin{equation}
{h_{i\left( {\left| {K \pm i} \right|} \right)}} \triangleq {{\tilde h}_i} \sim \mathcal{C}\mathcal{N}\left( {0,\sigma _h^2} \right),\forall i \in \left[ {1,2K} \right],
\end{equation}
where ${\sigma _h^2}$ represents the channel gains of the strongest (or weakest) cross-side direct link.
Using these approximations, the upper and lower bounds of ${R}$ can be derived. 

Specifically, based on the derivations in Appendix \ref{sumrate_up} and incorporating \eqref{num_re}, we obtain that when $N \geqslant {N_{{\text{null}}}}$, the upper bound of the ergodic sum-rate ${R}$ is given by
\begin{equation}
E\left( {{{\bar R}}} \right) = 2K{\log _2}\left( {1 + \frac{{P\left( {c{\varsigma ^2}{N^2}{{\bar L}_a} + \bar \sigma _h^2} \right)}}{{\left( {1 - {\varsigma ^2}} \right)\left( {2K - 1} \right)PN{{\bar L}_a} + {\sigma ^2}}}} \right)
\label{eq_E_rate},
\end{equation}
where ${{{\bar L}_a}}$ and ${\bar \sigma _h^2}$ denote the channel gains of the strongest cascaded reflective link and the strongest cross-side direct link, respectively. $c$ is a constant defined as
\begin{equation}
c = \left\{ \begin{gathered}
  \frac{\varepsilon }{{1 + \varepsilon }}{\text{, if Rician factor }}\varepsilon  \gg 1 \hfill \\
  {\left( {1 + \frac{{{\varepsilon ^2}}}{4}} \right)^2}\frac{{{\pi ^2}}}{{16}}{\text{, if }}\varepsilon  \ll 1 \hfill \\ 
\end{gathered}  \right.
\end{equation}

Furthermore, since the CSI error scales with ${\text{SN}}{{\text{R}}^{ - \alpha }}$ \cite{khalilsarai2023fdd,joudeh2016sum}, we model the CSI accuracy factor $\varsigma$ as
\begin{equation}
1 - {\varsigma ^2} = {\text{SN}}{{\text{R}}^{ - \alpha }} \Leftrightarrow {\varsigma ^2} = 1 - {\text{SN}}{{\text{R}}^{ - \alpha }}
\label{eq_s_SNR},
\end{equation}
where the CSI error decay exponent $\alpha  \in \left[ {0,\infty } \right)$. 

Therefore, substituting \eqref{eq_s_SNR} into \eqref{eq_E_rate} and applying the definition of DoF, we obtain
\begin{equation}
{\text{Do}}{{\text{F}}_{{\text{up}}}} = \mathop {\lim }\limits_{{\text{SNR}} \to \infty } \frac{{E\left( {\bar R} \right)}}{{{{\log }_2}\left( {{\text{SNR}}} \right)}} = 2K\min \left( {\alpha ,1} \right)
\label{eq_DoF_up},
\end{equation}
the detailed derivation is provided in Appendix \ref{APP_DoF_up}.

\subsection{Lower Bound on DoF}
However, the achievability of the DoF upper bound is difficult to prove in a single-antenna multi-user system. This is because the optimal phase-shift design must simultaneously cancel interference and align signal channels to maximize the sum-rate, while the modulus-1 constraints of passive REs make a closed-form solution intractable. To facilitate analysis under imperfect CSI, we now focus on the DoF lower bound and apply the Squeeze Theorem to determine the system DoF.

Specifically, similar to \eqref{eq_E_rate}, when the number of passive REs satisfies $N \geqslant {N_{{\text{null}}}}$ as given in \eqref{num_re}, the lower bound of the ergodic maximum sum-rate ${R}$ is given by
\begin{equation}
E\left( {\underline{R}} \right) = \frac{{ - 2K}}{{\ln 2}}{e^{\frac{1}{{\hat P\lambda }}}}{\text{Ei}}\left( { - \frac{1}{{\hat P\lambda }}} \right),     
\end{equation}
where $\hat P\lambda  = \frac{{P\left( {{\varsigma ^2}N{L_a} + \sigma _h^2} \right)}}{{{{\tilde P}_{{\text{eq}}}} + {\sigma ^2}}}$, and ${\text{Ei}}\left(  \cdot  \right)$ denotes the exponential integral.
The detailed derivation is provided in Appendix \ref{APP_rate_low}.

Subsequently, based on the derivations in Appendix \ref{DoF_low}, the DoF lower bound is given by
\begin{equation}
{\text{DoF}}_{{\text{low}}} = \mathop {\lim }\limits_{{\text{SNR}} \to \infty } \frac{{E\left( {\underline{R} } \right)}}{{{{\log }_2}\left( {{\text{SNR}}} \right)}} = 2K\min \left( {\alpha ,1} \right)
\label{eq_DoF_low}.
\end{equation}
By combining it with ${\text{Do}}{{\text{F}}_{{\text{up}}}}$ in \eqref{eq_DoF_up}, the system DoF under imperfect CSI is bounded as ${\text{Do}}{{\text{F}}_{{\text{low}}}} \leqslant {\text{DoF}} \leqslant {\text{Do}}{{\text{F}}_{{\text{up}}}}$, i.e.,
\begin{equation}
\begin{aligned}
2K\min \left( {\alpha ,1} \right) \leqslant {\text{DoF}} \leqslant 2K\min \left( {\alpha ,1} \right)\\ \Leftrightarrow {\text{DoF}} = 2K\min \left( {\alpha ,1} \right).
\end{aligned}
\end{equation}

\section{Simulation Result}
\label{Simulation}
In this section, we present numerical results to validate our theoretical findings. Since Theorems~\ref{the1} and~\ref{the2} constitute the theoretical basis of our work, we first consider an idealized scenario with identical path losses within each channel type but different across types, to assess the accuracy of their predictions. We then examine a more practical Rician channel model, in which users experience varying path losses, to demonstrate the applicability of our results under realistic conditions.

\subsection{Simulation Setup}
Without loss of generality, we assume that all channels are reciprocal. 
Unless specified otherwise, the system bandwidth is $1{\rm{MHz}}$, and the noise spectral density is $ - {\rm{174dBm/Hz}}$. All users transmit with identical power $P = 30\text{dBm}$. The left and right side users are evenly distributed along the $x$-axis within $\left[ {5\rm{m},50\rm{m}} \right]$ at $y=-15\rm{m}$ and $y=15\rm{m}$, respectively, both sharing a $z$-coordinate of $-20\rm{m}$. The RIS is located at $\left( {{\rm{0,0,0}}} \right)$. And 200 independent trials are conducted.

In Subsection~\ref{A}, to verify the accuracy of Theorems~\ref{the1} and~\ref{the2}, which play key roles in our theoretical derivations, we model the reflection channels as ${{\bf{h}}_{Ii}} \sim {\cal C}{\cal N}\left( {0,\tau^2{{\bf{I}}_N}} \right)$, where the path losses are set to $\tau^2 =- 30{\rm{dBm}}$.
The direct channels are modeled as ${h_{ij}} \sim \mathcal{CN}(0, (\eta_1 \sigma_1)^2)$ for same-side users and ${h_{ij}} \sim \mathcal{CN}(0, (\eta_2 \sigma_1)^2)$ for cross-side users, where $\sigma_1 = \tau^2$ and $\eta_1$, $\eta_2$ are the respective strength ratios to the cascaded reflective links.

In the remaining subsections, to evaluate the applicability of our theoretical results in practical scenarios with heterogeneous path losses caused by user locations, the direct links and RIS-related links are modeled as in \eqref{eq_hij} and \eqref{eq_hIi}, respectively. 
The Rician factor $\varepsilon$ is set to 10. The large-scale path loss $L$ in dB is modeled as \cite{jiang2021learning,ZhouGui}
\begin{equation}
L\left( \text{d} \right) =  - 30 - 10\omega {\log _{10}}\left( \text{d} \right), 
\end{equation}
where $\text{d}$ is the link distance in meters.
The path-loss exponent $\omega$ for direct links is set to $4$ to capture the higher attenuation in urban or obstructed environments. For reflective links, $\omega$ is set to $2$, reflecting the fact that RIS is typically deployed to create favorable propagation conditions and enhance transmission quality \cite{oa2018determination}.

\subsection{Accuracy Verification of Theorems~\ref{the1} and~\ref{the2}}
\label{A}
In Fig.~\ref{eta2_30}, we evaluate the theoretical thresholds $N_{\text{nec}}$ and $N_{\text{suf}}$ under an idealized scenario with uniform path loss assumed for each link category.
These thresholds, derived directly from Theorems~\ref{the1} and~\ref{the2} (as in \eqref{eq13} and \eqref{eq22}), are denoted as “necessary” and “sufficient,” respectively. The curves labeled “$99\%$ probability of full-DoF” and “$1\%$ probability of full-DoF” indicate the values of $N$ at which full DoF is achieved with $99\%$ and $1\%$ probability.


It can be observed that the simulation results closely match the theoretical bounds. In particular, as the link-strength ratio $\eta$ increases, the required number of REs grows consistently with our theoretical predictions, as stronger direct links induce more severe interference that must be eliminated by additional REs. Moreover, the required number of passive REs falls within the predicted range. However, for small $\eta$, which leads to a small $N$, some deviations from the theoretical sufficient value occur. This is because the sufficient condition is derived under the assumption of $N \to \infty$, which may lead to reduced accuracy when $N$ is small.

\begin{figure}
	\centering\includegraphics[width=6cm]{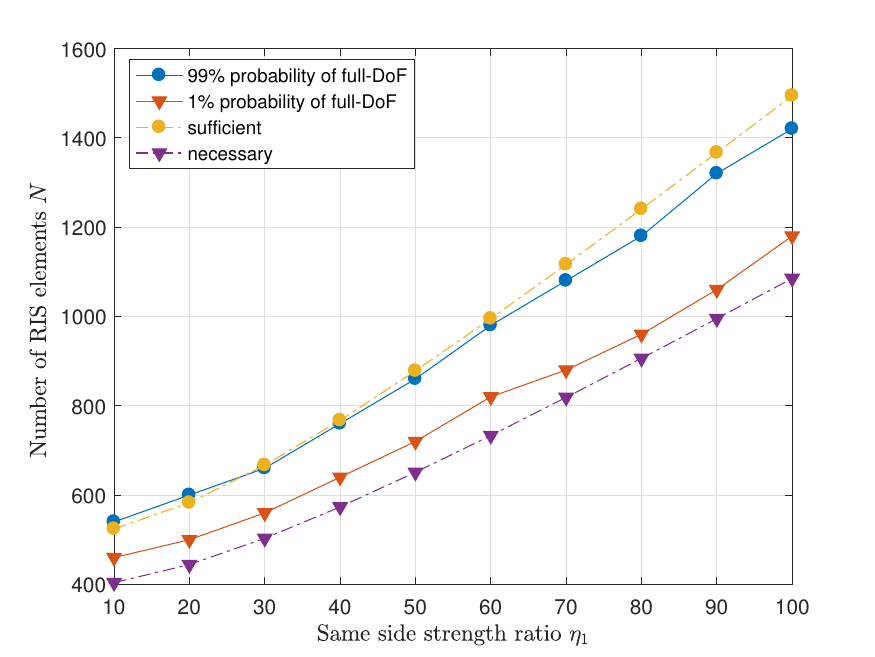}
    \caption{Sufficient and necessary conditions on REs under idealized channel settings ($K=10$, $\eta_2=30$ ).}
	\label{eta2_30}
\end{figure}

\subsection{Required Number of REs in Practical Channels}
\label{different_path}

\begin{figure*}
\centering
\subfigure[Total DoF vs. number of REs.]{
\includegraphics[width=6cm]{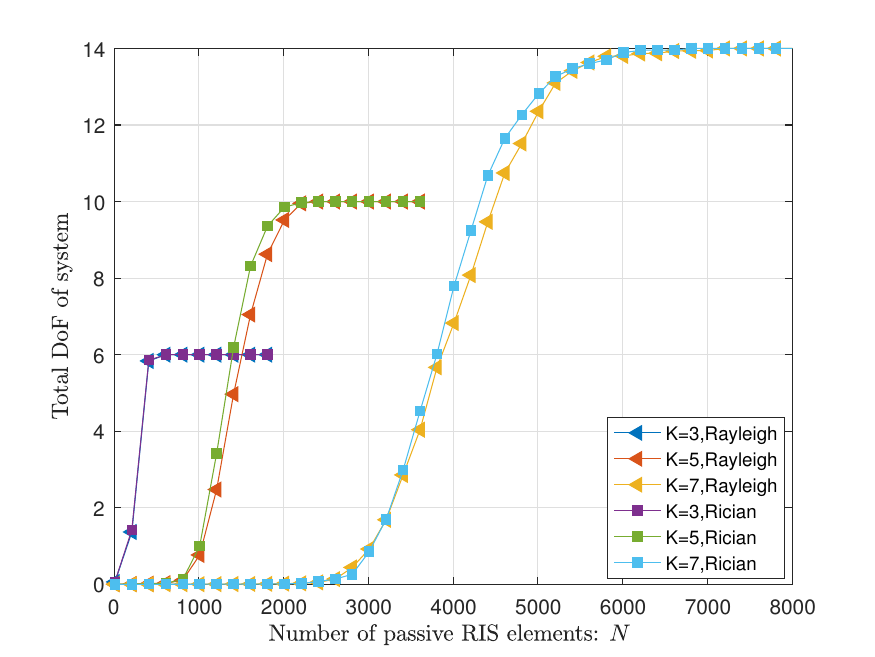}
\label{DoF}
}
\quad
\subfigure[Required REs for full DoF vs. number of users.]{
\includegraphics[width=6cm]{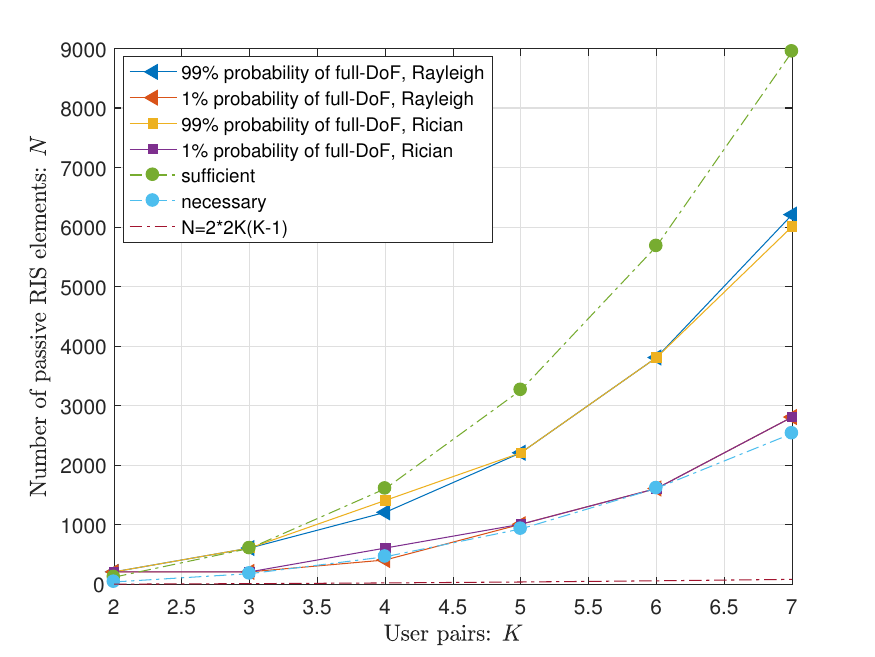}
\label{DoF_N}
}
\caption{Requirement number of REs for full DoF under practical channel settings. }
\label{DoF_RIS}
\end{figure*}

\begin{figure}
	\centering\includegraphics[width=6cm]{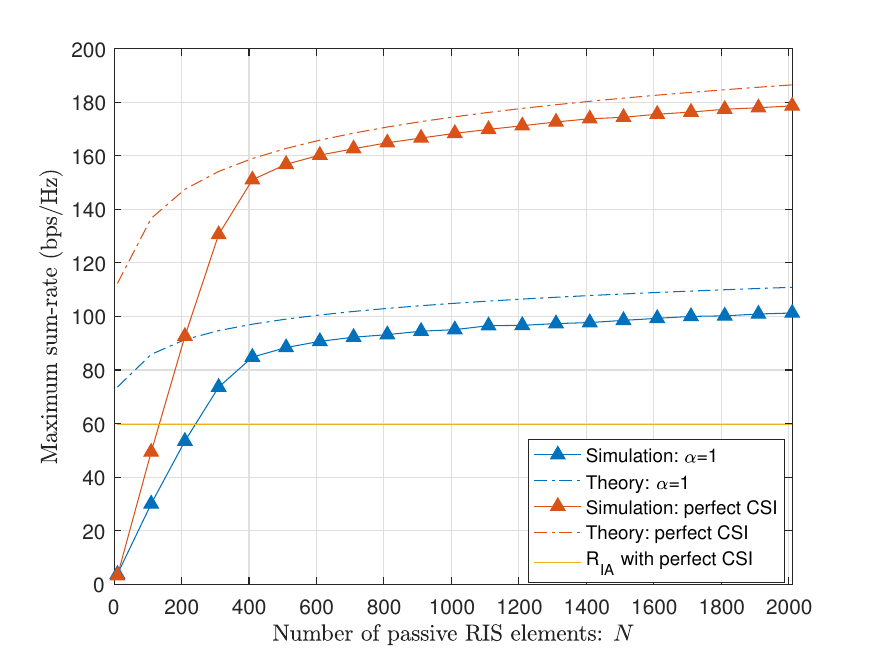}
    \caption{Maximum sum-rate vs. number of REs.}
	\label{upper_bound}
\end{figure}

Furthermore, in Fig.~\ref{DoF_RIS}, we evaluate the required number of passive REs under practical scenarios, where path losses depend on user locations and link distances. Since users on each side are distributed over a fixed region, more users lead to smaller inter-user distances and hence stronger direct links (i.e., smaller path losses). Moreover, each user has a different distance to every other user. Both Rayleigh and Rician channel models are considered.

As shown in Fig.~\ref{DoF}, the full DoF of $2K$ can be achieved once the number of passive REs $N$ exceeds a certain threshold, which is nearly independent of the channel model. Moreover, $N$ increases with the number of users, as more users cause stronger interference, requiring additional REs for suppression.

Fig.~\ref{DoF_N} shows that the actual required REs are much larger than the conventional threshold $N = 2 \times 2K(K - 1)$, which equals twice the number of interference-nulling equations \cite{jiang2022interference}. This discrepancy arises because higher user density leads to stronger same-side direct interference; hence, more passive REs are needed for interference nulling. Therefore, the direct channel strength should be considered in passive RIS design.

In contrast, our theoretical results accurately capture these effects. The “necessary” line corresponds to \eqref{eq13}, where the channel strength ratios ${\eta _1}$ and ${\eta _2}$ are calculated from the average path losses of the cascaded reflective, same-side direct, and cross-side direct links. This result provides a tighter lower bound on the required number of passive REs. While the “sufficient” line in \eqref{eq22} is derived using the average path losses of the same-side and cross-side direct links, with the cascaded reflective link set to its minimum value determined by users at $\left( {50{\text{m}},\pm 15{\text{m}},-20{\text{m}}} \right)$. It can serve as an upper bound.

\subsection{Impact of Imperfect CSI on Sum-Rate and DoF}
\label{scaling_law}
Fig.~\ref{upper_bound} shows the maximum sum-rate versus the number of REs under imperfect CSI. To demonstrate the efficiency of RIS in interference nulling, we compare it with the conventional interference alignment (IA) without RIS. 
By IA, the sum DoF of the bi-directional $K$-user interference network is $K$, and each user still gets half a DoF \cite{cheng2015degrees}. Since the interference-free sum-rate can be expressed as
\begin{equation}
{R_{{\text{no}}{\text{.int}}}} \approx \sum\limits_{i = 1}^{2K} {{\text{Do}}{{\text{F}}_{{\text{full,}}i}} \cdot {{\log }_2}\left( {{\text{SN}}{{\text{R}}_i}} \right)}  = \sum\limits_{i = 1}^{2K} {{{\log }_2}\left( {{\text{SN}}{{\text{R}}_i}} \right)},
\end{equation}
the sum-rate achieved by IA is
\begin{equation}
\begin{gathered}
  {R_{{\text{IA}}}} \approx \sum\limits_{i = 1}^{2K} {{\text{Do}}{{\text{F}}_{{\text{IA,}}i}} \cdot {{\log }_2}\left( {{\text{SN}}{{\text{R}}_i}} \right)}  = \sum\limits_{i = 1}^{2K} {\frac{1}{2}{{\log }_2}\left( {{\text{SN}}{{\text{R}}_i}} \right)}  \hfill \\
   \Rightarrow {R_{{\text{IA}}}} \approx \frac{1}{2}{R_{{\text{no}}{\text{.int}}}} = \frac{1}{2}\sum\limits_{i = 1}^{2K} {{{\log }_2}\left( {1 + \frac{{P{{\left| {{h_{i\left( {\left| {K \pm i} \right|} \right)}}} \right|}^2}}}{{{\sigma ^2}}}} \right)}.  \hfill \\ 
\end{gathered}
\label{eq_IA}
\end{equation}
Moreover, under the MMSE channel estimator, the CSI error decay exponent is $\alpha  = 1$ \cite{marzetta2006fast}. Thus, according to \eqref{eq_s_SNR}, the CSI accuracy factor $\varsigma$ is given by
\begin{equation}
1 - {\varsigma ^2} = \frac{1}{\text{SNR}} \Rightarrow \varsigma  = \sqrt {1-\frac{1}{\text{SNR}}}  \triangleq {\varsigma _{{\text{MMSE}}}}.
\end{equation}

In Fig.~\ref{upper_bound}, the “Simulation” curves are obtained from \eqref{max} using manifold optimization initialized by the alternating projection in \eqref{eq6}. The “Theory” curves represent the upper bound $E\left( {{{\bar R}}} \right)$ in \eqref{eq_E_rate}, where cross-side direct links use average path loss and cascaded reflective links use the maximum value determined by users at $\left( {5{\text{m}},\pm 15{\text{m}},-20{\text{m}}} \right)$. 
It is observed that when the number of REs exceeds a certain threshold (approximately $605$ according to \eqref{num_re}), the growth of the maximum sum-rate aligns well with the theoretical analysis.
Although imperfect CSI (where ${\varsigma_{{\text{MMSE}}}}$ is based on the average received SNR) may reduce the sum-rate due to severe residual interference caused by CSI errors, the sum-rate remains substantially higher than that of conventional interference alignment (i.e., $R_{IA}$ given in \eqref{eq_IA}).
This is because the RIS can still exploit its passive beamforming gain and additional spatial degrees of freedom to boost effective signal power.

Fig.~\ref{sum_rate_SNR} shows the maximum sum-rate and DoF versus transmit power $P$. To ensure the interference with estimated CSI can be completely canceled, we set the number of REs to $N=800$ according to Fig.~\ref{upper_bound} and \eqref{num_re}. The “Theory” curves use the same parameters as in Fig.~\ref{upper_bound}. 
From Fig.~\ref{rate_SNR}, it is observed that in the low-SNR regime (i.e., small $P$), the achievable rates under imperfect CSI increase more rapidly with $P$ than those under perfect CSI, and the corresponding rate slope becomes steeper as the CSI error decay exponent $\alpha$ increases, although the resulting sum rates remain lower than those achieved with perfect CSI. This behavior can be attributed to the residual interference induced by CSI errors, which can be more effectively mitigated at low SNR as $P$ or $\alpha$  increases.
In contrast, when $P$ is large, the rate slopes gradually approach a constant determined by the system DoF. In particular, for large $P$ and $\alpha  = 3$, the achievable rate closely matches that of the perfect-CSI case, 
which indicates that the impact of residual interference becomes negligible.

These observations are further confirmed in Fig.~\ref{DoF_SNR} from a DoF perspective. 
However, when the CSI error is large (i.e., the CSI error decay exponent $\alpha$ is small, such as $\alpha=1, 0.5$), a disparity exists between the simulation and the theoretical prediction, whereas this disparity decreases as the transmit power $P$ increases. This observation is consistent with the analysis in \eqref{eq_DoFup_2} and \eqref{eq_DoF_low_2}. Specifically, the theoretical DoF is characterized as a high-SNR (i.e., $P \to \infty$) asymptotic metric, where the lower-order terms associated with CSI errors are ignored. However, in practice, since $P$ is limited, these neglected terms become non-negligible and cause performance degradation. As $P$ increases, their impact becomes weaker, thereby narrowing the gap between theory and simulation.

\begin{figure*}
\centering
\subfigure[Maximum sum-rate vs. $P$.]{
\includegraphics[width=6cm]{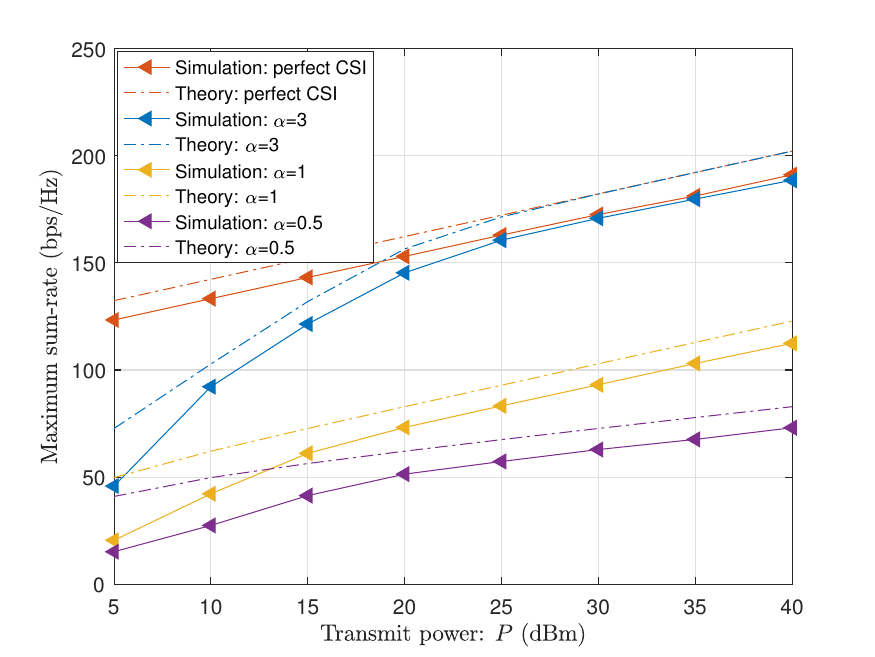}
\label{rate_SNR}
}
\quad
\subfigure[DoF vs. $P$.]{
\includegraphics[width=6cm]{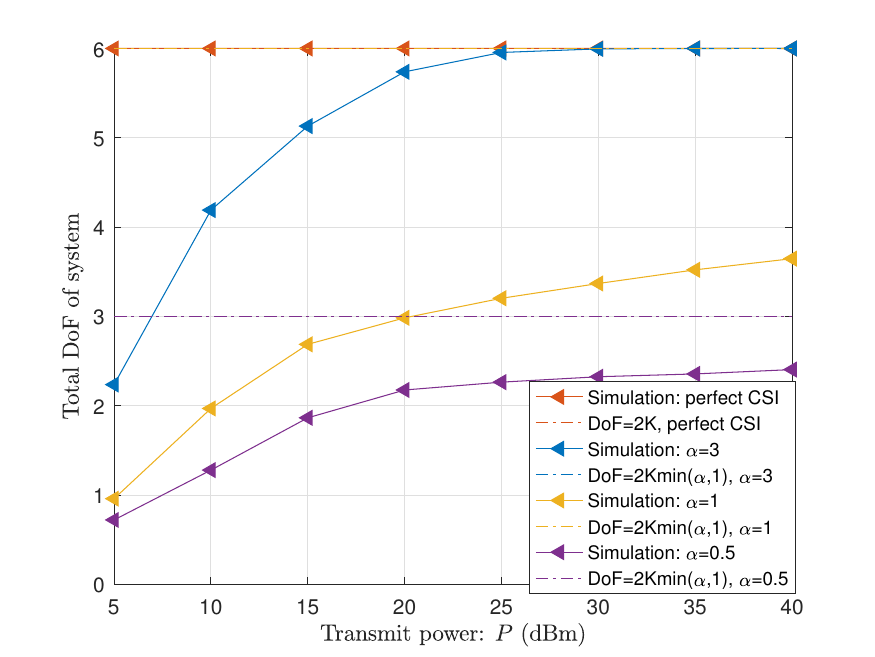}
\label{DoF_SNR}
}
\caption{Maximum sum-rate and DoF versus transmit power $P$ ($K=3$).}
\label{sum_rate_SNR}
\end{figure*}

\section{Conclusion}
\label{Conclusion}
This paper investigates the feasibility of interference-free transmission (in terms of DoF) in the two-way $K$-user interference channel, which is regarded as the most interference-limited network. By leveraging the high dimensionality and randomness of the channel coefficients and exploiting the universal concentration phenomenon in high-dimensional spaces, we show that interference-free transmission is feasible.
Furthermore, we provide the necessary and sufficient conditions on the number of RIS elements, which are of the same order and can be succinctly expressed as $O(K\max({K,\eta}))$, where $\eta$ denotes the strength ratio between the direct and cascaded links.
We also characterize the impact of imperfect CSI on the achievable DoF of the TW-K-IFC. Specifically, a DoF of $2K\min(\alpha,1)$ can be achieved when the CSI error variance scales as $\text{SNR}^{-\alpha}$. This implies that interference-free DoF is preserved when $\alpha \geqslant 1$, which is typically satisfied in practice since $\alpha \approx 1$. Future work includes extending the proposed framework to discrete RIS phase shifts, as well as to multi-RIS systems and scenarios with spatially correlated RIS elements.

\appendix
\subsection{Proof of Theorem \ref{the1}}	
\label{appA}
Since ${{\bar S}_1} = \left\{ {\left. {{\mathbf{x}} \in {\mathbb{C}^{N}}} \right|{\mathbf{x}} = {{\mathbf{x}}_0} + {\mathbf{z}},{\mathbf{z}} \in {\text{null}}\left( {\mathbf{G}} \right)} \right\}$ and
${{\bar S}_2} = \left\{ {\left. {{\mathbf{x}} \in {\mathbb{C}^{N}}} \right|\left| {{x_i}} \right| = \gamma} \right\}$, 
${{\bar S}_2} \subseteq {{\hat S}_2} = \left\{ {\left. {\mathbf{x}} \right|\left| {\mathbf{x}} \right| = \gamma\sqrt N } \right\}$, it follows that if ${{\bar S}_1} \cap {{\bar S}_2} \ne \varnothing$, then ${{\bar S}_1} \cap {{\hat S}_2} \ne \varnothing$. Conversely, if ${{\bar S}_1} \cap {{\hat S}_2} = \varnothing $, then ${{\bar S}_1} \cap {{\bar S}_2} = \varnothing $.
Moreover,
since each $\mathbf{x} \in \bar{S}_1$ can be decomposed as
\begin{equation}
{\mathbf{x}} = {{\mathbf{\tilde x}}} + {{\mathbf{x}}_\parallel },    
\end{equation}
where ${{\mathbf{\tilde x}}} = {\text{Pro}}{{\text{j}}_{{{\left( {{\text{null}}\left( {\mathbf{G}} \right)} \right)}^ \bot }}}\left( {{{\mathbf{x}}_0}} \right)$, ${{\mathbf{x}}_\parallel }$ is a vector lying in the null space of ${\mathbf{G}}$ (i.e., ${{\mathbf{x}}_\parallel } \in {\text{null}}\left( {\mathbf{G}} \right)$), and ${{\mathbf{\tilde x}}} \bot {{\mathbf{x}}_\parallel }$. 
It follows that, if ${{\bar S}_1} \cap {{\hat S}_2} \ne \varnothing$, then there must exist a ${{\mathbf{x}}_\parallel }$ such that
\begin{equation}
{\left| {{{\mathbf{\tilde x}}} + {{\mathbf{x}}_\parallel }} \right|^2} = {\gamma ^2}N \Leftrightarrow {\left| {{\mathbf{\tilde x}}} \right|^2} + {\left| {{{\mathbf{x}}_\parallel }} \right|^2} = {\gamma ^2}N.  
\label{eqa48}
\end{equation}
Thus, a feasible ${{\mathbf{x}}_\parallel }$ exists if and only if 
\begin{equation}
{\left| {{\mathbf{\tilde x}}} \right|^2} \leqslant {\gamma ^2}N \Leftrightarrow \left| {{\text{Pro}}{{\text{j}}_{{{\left( {{\text{null}}\left( {\mathbf{G}} \right)} \right)}^ \bot }}}\left( {{{\mathbf{x}}_0}} \right)} \right| \leqslant \gamma\sqrt N .
\end{equation}
Consequently, if $\left| {{\text{Pro}}{{\text{j}}_{{{\left( {{\text{null}}\left( {\mathbf{G}} \right)} \right)}^ \bot }}}\left( {{{\mathbf{x}}_0}} \right)} \right|>\gamma\sqrt N$, we have ${{\bar S}_1} \cap {{\hat S}_2} = \varnothing $, implying ${{\bar S}_1} \cap {{\bar S}_2} = \varnothing $. This completes the proof.

\subsection{Proof of Theorem \ref{the2}}	
\label{appB}
The proof is obtained by contradiction. We assume that when $d   \leqslant   \frac{r}{{\sqrt 2 }}$, the cross-section ${A'}{O'}{B'}$ of the sphere and ${S_1}$ does not pass through any red point of set ${S_2}$ (i.e., ${S_1} \cap {S_2} = \varnothing $), as shown in Fig. \ref{d3}. 
Since the red points of ${S_2}$ are distributed such that the angle between the vectors from the origin to any two points is $\frac{\pi }{2}$, and $d  \leqslant  \frac{r}{{\sqrt 2 }}$ implies that the full apex angle $\theta$ of the cone (or the central angle of the yellow spherical cap) satisfies $\theta  \geqslant \frac{\pi }{2}$, the cone (or yellow sphere cap) must contain at least one red dot, such as ${{\bf{v}}_m}$.

\begin{figure}[htp] 
	\centering\includegraphics[width=3cm]{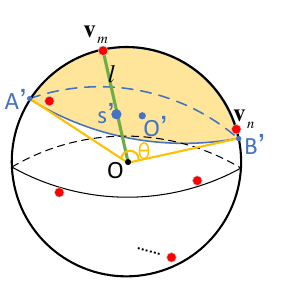}
	\caption{Geometric relationship of ${S_1}$ and ${S_2}$ ($\theta  > \frac{\pi }{2}$).}
	\label{d3}
\end{figure}

Furthermore, since the sphere is intersected by $S_1$ along the circular cross-section ${A'}{O'}{B'}$, and this circular region is entirely contained within $S_1$, the straight line $l$ passing through both the origin $O$ and ${{\bf{v}}_m}$ must intersect $S_1$ at some point $s'$. Let the coordinates of $s'$ be given by $\left( {{v_{s'1}}, \cdots ,{v_{s'N}}} \right)$. Since $s'$ lies along the line connecting $O$ and ${{\bf{v}}_m}$, it
follows that $s'$ can be expressed as a scaled version of ${{\bf{v}}_m}$, i.e.,
\begin{equation}
s' = \mu {{\bf{v}}_m},0 < \mu  < 1.
\end{equation}
Consequently, for each coordinate component, we have
\begin{equation}
\left| {{v_{s'i}}} \right| = \left| {\mu {v_{m,i}}} \right| < \left| {{v_{m,i}}} \right| = 1,
\end{equation}
where ${v_{m,i}}$ denotes the $i$-th element of ${{\bf{v}}_m}$.
It implies that 
\begin{equation}
s' \in \left\{ {{S_1} \cap \left( {{{\bar S}_3}:\left. {\bf{v}} \right|\left| {{v_i}} \right| < 1,\forall i = 1, \cdots ,N} \right)} \right\} \ne \varnothing .
\end{equation}
Therefore, for the $i$-th element of $\bf{v}$ in sets $S_1$ and ${{\bar S}_3}$ (i.e., ${{v_{{S_1},i}}}$ and ${{v_{{{\bar S}_3},i}}}$), we have
\begin{equation}
\begin{aligned}
\left( {{v_{{S_1},i}}:{v_i} = {{\left[ { - {{\left( {{{\bf{A}}^H}} \right)}^ + }{\bf{b}} + {\rm{null}}\left( {{{\bf{A}}^H}} \right)} \right]}_i}} \right) \\
\cap \left( {{v_{{{\bar S}_3},i}}:\left| {{v_i}} \right| < 1} \right) \ne \varnothing .
\end{aligned}
\end{equation}
Since ${v_{{S_1},i}}$ is obtained by translating the null space component ${\left[ {{\rm{null}}\left( {{{\bf{A}}^H}} \right)} \right]_i}$ by ${\left[ {{{\left( {{{\bf{A}}^H}} \right)}^ + }{\bf{b}}} \right]_i}$, as shown in Fig. \ref{fig4}, if ${v_{{S_1},i}} \cap {v_{{{\bar S}_3},i}} \ne \varnothing $, we have ${v_{{S_1},i}} \cap \left( {{v_{{S_2},i}}:\left| {{v_i}} \right| = 1} \right) \ne \varnothing ,\forall i \in \left[ {1,N} \right].$
It is equivalent to ${S_1} \cap {S_2} \ne \varnothing$, which leads to a contradiction. Thus, when $d  \leqslant  \frac{r}{{\sqrt 2 }}$, the cross-section must pass through at least one point of the set $S_2$, which completes the proof.
\begin{figure}[htp] 
	\centering\includegraphics[width=5cm]{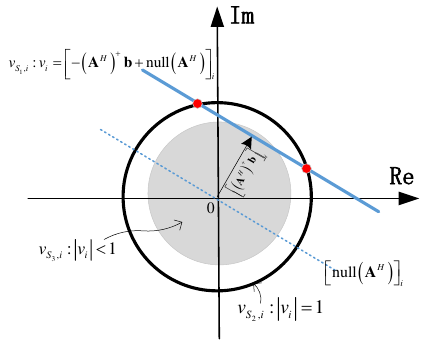}
	\caption{Geometric relationship of the $i$-th element of ${\bf{v}}$ in ${S_1}$, ${S_2}$ and ${{\bar S}_3}$.}
	\label{fig4}
\end{figure}

\textbf{Intuition explanation:}

The intuition behind the proof can be illustrated through a toy example. Specifically, consider only two points of $S_2$ (${{\mathbf{v}}_m}$ and ${{\mathbf{v}}_n}$), whose position vectors are orthogonal, i.e., the angle between them is $\frac{\pi }{2}$. Due to the randomness of ${\mathbf{A}}$ and ${\mathbf{b}}$, the affine subspace $S_1$ can rotate around the sphere on which ${{\mathbf{v}}_m}$ and ${{\mathbf{v}}_n}$ lie. Our goal is to ensure that $S_1$ intersects at least one point of $S_2$ regardless of its rotation. 

As illustrated in Fig. \ref{intui1}, such an intersection is impossible when $d>r$. When $d=r$, an intersection can occur for certain rotations, such as the blue dashed line. However, it cannot be guaranteed for all orientations. In the worst-case, where $S_1$ is not biased toward either point (i.e., blue solid line), $S_1$ does not intersect either point. Moreover, as shown in Fig. \ref{intui2}, the intersection is still not guaranteed even when $d$ is slightly smaller than $r$.

When $d = \frac{r}{{\sqrt 2 }}$, Fig. \ref{intui3} shows that $S_1$ can intersect two points even in the worst case. At first glance, Fig. \ref{intui4} may suggest that this property no longer holds when $d < \frac{r}{{\sqrt 2 }}$. However, in fact, $S_1$ still intersects the points. The apparent contradiction arises because Fig. \ref{intui4} is only a projection of the underlying high-dimensional geometry and therefore does not fully preserve the geometric relationships. In the original space, the element-wise relationship of ${S_1}$, ${{\mathbf{v}}_{s'}}$ and ${{\mathbf{v}}_m}$ can be illustrated by Fig. \ref{fig4}. Specifically,
\begin{equation}
{S_{1,i}} \cap \left\{ {\left. {{v_{s',i}}} \right|\left| {{v_i}} \right| < 1} \right\} \ne \varnothing  \Rightarrow {S_{1,i}} \cap \left\{ {\left. {{v_{m,i}}} \right|\left| {{v_i}} \right| = 1} \right\} \ne \varnothing ,\forall i.
\end{equation}
Since the above relationship holds for every corresponding pair of elements in ${S_1}$, ${{\mathbf{v}}_{s'}}$, and ${{\mathbf{v}}_m}$, we can conclude that ${S_1} \cap {{\mathbf{v}}_m} \ne \varnothing$.

\begin{figure}[htp]
\centering
\subfigure[$r \leqslant d$]{
\includegraphics[width=0.3\linewidth]{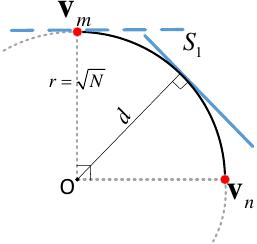}
\label{intui1}
}
\quad
\subfigure[$\frac{r}{{\sqrt 2 }} < d < r$]{
\includegraphics[width=0.32\linewidth]{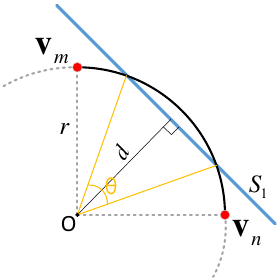}
\label{intui2}
}
\\
\subfigure[$d = r \cdot \cos \left( {\frac{\theta }{2}} \right) = \frac{r}{{\sqrt 2 }}$]{
\includegraphics[width=0.32\linewidth]{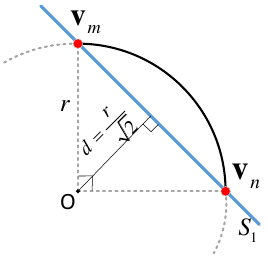}
\label{intui3}
}
\quad
\subfigure[$d < \frac{r}{{\sqrt 2 }}$ or $\theta  > \frac{\pi }{2}$]{
\includegraphics[width=0.3\linewidth]{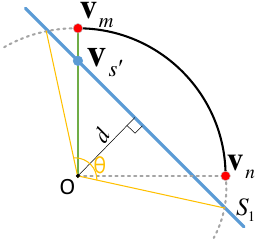}
\label{intui4}
}
\caption{Toy example in a high-dimensional complex space.}
\label{intui}
\end{figure}

\subsection{Proof of the equivalent noise power ${P_{{\text{eq}}}}$ in \eqref{p_eq}}
\label{APP_p_eq}
\begin{equation}
\begin{aligned}
  {P_{{\text{eq}}}} &= E\left( {{{\left| {{n_{{\text{eq1}}}} + {n_{{\text{eq2}}}}} \right|}^2}} \right) \hfill \\
   &= \left( {1 - {\varsigma ^2}} \right)\sum\limits_{j \in [1,2K], \atop j \ne i} {{P_j}E\left( {{{\left| {\Delta {\mathbf{a}}_{ij}^H{\mathbf{v}}} \right|}^2}} \right)}  \hfill \\
   &= \left( {1 - {\varsigma ^2}} \right)\sum\limits_{j \in [1,2K], \atop j \ne i} {{P_j}E\left( {{{\left| {\sum\limits_{n = 1}^N {\Delta a_{ij,n}^*{v_n}} } \right|}^2}} \right)}  \hfill \\
   &= \left( {1 - {\varsigma ^2}} \right)\sum\limits_{j \in [1,2K], \atop j \ne i} {{P_j}N {{L_{Ii}}{L_{Ij}}} }  \hfill \\ 
\end{aligned}
\end{equation}
The last equation follows from the facts that linear combinations of independent Gaussian random variables remain Gaussian and that $\Delta {{\mathbf{a}}_{ij}}$ is rotationally invariant \cite{proakis2001digital,yang2021performance}.
Specifically, since $\Delta {{\mathbf{a}}_{ij}} \sim \mathcal{C}\mathcal{N}\left( {0,\left({{L_{Ii}}{L_{Ij}}}\right){{\mathbf{I}}_N}} \right)$ and ${\mathbf{v}} = {[{e^{j{\beta _1}}}, \ldots ,{e^{j{\beta _N}}}]^T}$, it follows that $\Delta a_{ij,n}^*{e^{j{\beta _n}}} \sim \mathcal{C}\mathcal{N}\left( {0,{{L_{Ii}}{L_{Ij}}} } \right)$ and $\sum\limits_{n = 1}^N {\Delta a_{ij,n}^*{v_n}}  \sim \mathcal{C}\mathcal{N}\left( {0,N {{L_{Ii}}{L_{Ij}}}} \right)$.

\subsection{Proof of the Ergodic ${R}$ Upper Bound}
\label{sumrate_up}
Since the upper bound of ergodic ${R}$ can be expressed as
\begin{equation}
\begin{aligned}
  E\left( {{R}} \right) =& E\left( {\sum\limits_{i = 1}^{2K} {{{\log }_2}\left( {1 + \frac{P}{{{{\tilde P}_{{\text{eq}}}} + {\sigma ^2}}}{{\left| {\varsigma {\mathbf{\tilde a}}_i^H{{\mathbf{v}}_{{\text{opt}}}} + {{\tilde h}_i}} \right|}^2}} \right)} } \right) \hfill \\
  \mathop  \leqslant \limits^{(a)}& E\left( {\sum\limits_{i = 1}^{2K} {{{\log }_2}\left( {1 + \tilde P{{\left( {\varsigma \sum\limits_{n = 1}^N {\left| {{{\tilde a}_{in}}} \right|}  + \left| {{{\tilde h}_i}} \right|} \right)}^2}} \right)} } \right) \hfill \\
  \mathop  \leqslant \limits^{(b)}& 2K{\log _2}\left( {1 + \tilde PE\left( {{X^2}} \right)} \right) \triangleq E\left( {{{\bar R}}} \right) \\ 
\end{aligned}
\label{eq_up_bound},
\end{equation}
where $\tilde P = \frac{P}{{{{\tilde P}_{{\text{eq}}}} + {\sigma ^2}}}$ with ${\tilde P}_{{\text{eq}}}$ defined in \eqref{eq_power}, $X = \varsigma \sum\limits_{n = 1}^N {\left| {{{\tilde a}_{in}}} \right|}  + \left| {{{\tilde h}_i}} \right|$, and ${{\tilde a}_{in}}$ is the $n$-th element of ${{{\mathbf{\tilde a}}}_i}$.
Inequality (a) follows from $\left| v_{{\text{opt}},n} \right| = 1$ and the triangle inequality, and (b) follows directly from Jensen’s inequality.
To calculate $E\left( {{X^2}} \right)$, we first state the following lemma.
\begin{lemma}
\label{rician_model}
Let $h = \sqrt L \left( {\sqrt {\frac{\varepsilon }{{1 + \varepsilon }}} {{\bar h}^{{\text{LoS}}}} + \sqrt {\frac{1}{{1 + \varepsilon }}} {{\tilde h}^{{\text{NLoS}}}}} \right)$, then we have
\begin{equation}
E\left( {\left| h \right|} \right) = \left\{ \begin{gathered}
  \sqrt {\frac{{\pi L}}{4}} \left( {1 + \frac{1}{{16}}{\varepsilon ^2}} \right),{\text{for weak LoS }}\varepsilon  \ll 1 \hfill \\
  \sqrt {\frac{{L\varepsilon }}{{1 + \varepsilon }}} \left( {1 + \frac{1}{{4\varepsilon }}} \right),{\text{for strong LoS }}\varepsilon  \gg 1 \hfill \\ 
\end{gathered}  \right.
.
\label{le}
\end{equation}
\end{lemma}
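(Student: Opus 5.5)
The plan is to identify $|h|$ as a Rice-distributed random variable, write its mean exactly through the Laguerre function $L_{1/2}(\cdot)$, and then expand that closed form in the two limits $\varepsilon\ll 1$ and $\varepsilon\gg 1$.

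\textbf{Step 1: identify the distribution.} As in the channel model \eqref{eq_hIi}, I take the LoS component $\bar h^{\text{LoS}}$ to be deterministic with unit modulus, and $\tilde h^{\text{NLoS}}\sim\mathcal{CN}(0,1)$. Then $h$ is complex Gaussian with mean of modulus $\nu=\sqrt{L\varepsilon/(1+\varepsilon)}$ and total variance $L/(1+\varepsilon)$. Hence each real dimension has variance $s^2=L/(2(1+\varepsilon))$, and $|h|$ is Rice$(\nu,s)$. The Rician $K$-factor is $\nu^2/(2s^2)=\varepsilon$, which is exactly the parameter in the statement.

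\textbf{Step 2: exact mean.} I will invoke the standard Rice-mean formula (or rederive it by integrating $r\,f_{\text{Rice}}(r)$ with the Bessel series of $I_0$):
\begin{equation*}
E(|h|)=s\sqrt{\tfrac{\pi}{2}}\,L_{1/2}\!\left(-\tfrac{\nu^2}{2s^2}\right)=\sqrt{\tfrac{\pi L}{4(1+\varepsilon)}}\;L_{1/2}(-\varepsilon).
\end{equation*}
Here
\begin{equation*}
L_{1/2}(-x)=e^{-x/2}\big[(1+x)I_0(x/2)+xI_1(x/2)\big].
\end{equation*}

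\textbf{Step 3: the weak-LoS regime $\varepsilon\ll 1$.} I will Taylor expand $L_{1/2}(-x)=1+\tfrac{x}{2}-\tfrac{x^2}{16}+O(x^3)$, using the power series of $e^{-x/2}$, $I_0$ and $I_1$. Next I expand $(1+\varepsilon)^{-1/2}=1-\tfrac{\varepsilon}{2}+\tfrac{3}{8}\varepsilon^2+O(\varepsilon^3)$. In the product, the first-order terms cancel, and the second-order coefficient is $\tfrac38-\tfrac14-\tfrac1{16}=\tfrac1{16}$. This gives $\sqrt{\pi L/4}\,(1+\varepsilon^2/16)$.

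\textbf{Step 4: the strong-LoS regime $\varepsilon\gg 1$.} I will use the asymptotic expansion $I_\nu(z)\sim \tfrac{e^z}{\sqrt{2\pi z}}\big(1-\tfrac{4\nu^2-1}{8z}\big)$ at $z=\varepsilon/2$. This yields $L_{1/2}(-\varepsilon)\approx 2\sqrt{\varepsilon/\pi}\,(1+\tfrac{1}{4\varepsilon})$. Substituting into the exact mean then gives $\sqrt{L\varepsilon/(1+\varepsilon)}\,(1+\tfrac1{4\varepsilon})$.

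\textbf{Main obstacle.} I expect the delicate part to be Step 4, the large-argument bookkeeping. The $O(1/z)$ corrections from $I_0$ and $I_1$ must be combined with the prefactors $(1+x)$ and $x$. The dominant terms largely cancel, so the $1/(4\varepsilon)$ coefficient comes from subleading terms and must be tracked carefully.

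As a fallback for this step, I would approximate the Rice variable directly by a Gaussian around $\nu$ and compute $E|h|\approx\nu\big(1+\tfrac{s^2}{2\nu^2}\big)$, using the radial-variance correction. Since $s^2/(2\nu^2)=1/(4\varepsilon)$, this reproduces the same correction factor and serves as an independent check.
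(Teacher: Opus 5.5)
Your proposal is correct and follows the paper's route: the paper also writes the exact mean as $E(|h|)=\sqrt{L}\sqrt{\tfrac{1}{1+\varepsilon}}\tfrac{\sqrt{\pi}}{2}L_{1/2}(-\varepsilon)$ and substitutes small- and large-$\varepsilon$ approximations of $\sqrt{\tfrac{1}{1+\varepsilon}}L_{1/2}(-\varepsilon)$. The only difference is that the paper quotes those approximations from the reference it cites, whereas you derive them via the Taylor series and Bessel asymptotics; your coefficients $\tfrac{1}{16}$ and $2\sqrt{\varepsilon/\pi}\,(1+\tfrac{1}{4\varepsilon})$ check out, but in Step 4 the leading terms add to $2\varepsilon$ rather than cancel, and the $\tfrac{1}{4\varepsilon}$ factor comes from the $O(1)$ terms $\tfrac54-\tfrac34=\tfrac12$ relative to $2\varepsilon$.
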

\begin{proof}
For the weak LoS component, we mean that ${{\sqrt {\frac{\varepsilon }{{1 + \varepsilon }}} } \mathord{\left/
 {\vphantom {{\sqrt {\frac{\varepsilon }{{1 + \varepsilon }}} } {\sqrt {\frac{1}{{1 + \varepsilon }}} }}} \right. \kern-\nulldelimiterspace} {\sqrt {\frac{1}{{1 + \varepsilon }}} }} \ll 1 \Leftrightarrow \varepsilon  \ll 1$. 
The remaining derivation follows directly from \cite{singh2022optimal}. Specifically, the mean of $\left| h \right|$ can be expressed as 
\begin{equation}
E\left( {\left| h \right|} \right) = \sqrt L \sqrt {\frac{1}{{1 + \varepsilon }}} \frac{{\sqrt \pi  }}{2}{L_{{1 \mathord{\left/
 {\vphantom {1 2}} \right.
 \kern-\nulldelimiterspace} 2}}}\left( { - \varepsilon } \right),  
\label{le1}
\end{equation}
where ${L_{{1 \mathord{\left/{\vphantom {1 2}} \right.\kern-\nulldelimiterspace} 2}}}\left(  \cdot  \right)$ is the Laguerre polynomial (see Appendix A in \cite{singh2022optimal}). 
Moreover, according to Section IV-A of \cite{singh2022optimal}, we obtain
\begin{equation}
\sqrt {\frac{1}{{1 + \varepsilon }}} {L_{{1 \mathord{\left/
 {\vphantom {1 2}} \right.
 \kern-\nulldelimiterspace} 2}}}\left( { - \varepsilon } \right) \approx \left\{ {\begin{aligned}
  &{2\sqrt {\frac{\varepsilon }{{\pi \left( {1 + \varepsilon } \right)}}} \left( {1 + \frac{1}{{4\varepsilon }}} \right),{\text{for large }}\varepsilon ,} \\
  &{1 + \frac{1}{{16}}{\varepsilon ^2},{\text{for small }}\varepsilon .} 
\end{aligned}} \right.
\label{le2}
\end{equation}
Substituting \eqref{le2} into \eqref{le1} yields \eqref{le}.
\end{proof}
According to Lemma~\ref{rician_model}, the mean and variance of ${\left| {{{\tilde a}_{in}}} \right|}$ are
\begin{equation}
\begin{aligned}
  E\left( {\left| {{{\tilde a}_{in}}} \right|} \right) &= E\left( {\left| {{{\hat h}_{Ii,n}}} \right|} \right)E\left( {\left| {{{\hat h}_{I\left( {\left| {K \pm i} \right|} \right),n}}} \right|} \right) \hfill \\
   &= \left\{ \begin{gathered}
  \frac{\pi }{4}{\left( {1 + \frac{1}{{16}}{\varepsilon ^2}} \right)^2}\sqrt {{L_{Id}}{L_{Iu}}} ,\varepsilon  \ll 1 \hfill \\
  \frac{\varepsilon }{{1 + \varepsilon }}{\left( {1 + \frac{1}{{4\varepsilon }}} \right)^2}\sqrt {{L_{Id}}{L_{Iu}}} ,\varepsilon  \gg 1 \hfill \\ 
\end{gathered}  \right. \hfill \\ 
\end{aligned}
\label{eq_Ea},
\end{equation}
\begin{equation}
\begin{aligned}
  \operatorname{var} \left( {\left| {{{\tilde a}_{in}}} \right|} \right) &= E\left( {{{\left| {{{\tilde a}_{in}}} \right|}^2}} \right) - {E^2}\left( {\left| {{{\tilde a}_{in}}} \right|} \right) \hfill \\
   &\approx \left\{ \begin{gathered}
  \left( {1 - \frac{{{\pi ^2}}}{{16}}\left( {1 + \frac{{{\varepsilon ^2}}}{4}} \right)} \right){L_{Id}}{L_{Iu}},\varepsilon  \ll 1 \hfill \\
  \left( {\frac{1}{{1 + \varepsilon }}} \right){L_{Id}}{L_{Iu}},\varepsilon  \gg 1 \hfill \\ 
\end{gathered}  \right.
\end{aligned}
\label{eq_vara}.
\end{equation}
Thus, by combining $E\left( {{X^2}} \right) = \operatorname{var} \left( X \right) + {E^2}\left( X \right)$, \eqref{eq_Ea} and \eqref{eq_vara}, we can get
\begin{equation}
\begin{aligned}
  E\left( {{X^2}} \right) &= {\varsigma ^2}N\operatorname{var} \left( {\left| {{{\tilde a}_{in}}} \right|} \right) + \operatorname{var} \left( {\left| {{{\tilde h}_i}} \right|} \right) \\
   &+ {\left( {\varsigma NE\left( {\left| {{{\tilde a}_{in}}} \right|} \right) + E\left( {\left| {{{\tilde h}_i}} \right|} \right)} \right)^2} \hfill \\
  &\mathop  \approx \limits^{N \to \infty } \left\{ \begin{gathered}
  {\left( {1 + \frac{{{\varepsilon ^2}}}{4}} \right)^2}\frac{{{\pi ^2}}}{{16}}{\varsigma ^2}{N^2}{L_{Id}}{L_{Iu}} + \sigma _h^2,\varepsilon  \ll 1 \\
  \frac{\varepsilon}{{1 + \varepsilon }}{\varsigma ^2}{N^2}{L_{Id}}{L_{Iu}} + \sigma _h^2,\varepsilon  \gg 1 \hfill \\ 
\end{gathered}  \right. \hfill \\ 
\end{aligned}
.
\end{equation}
For notational simplicity, we denote $E\left( {{X^2}} \right)$ as
\begin{equation}
E\left( {{X^2}} \right)\mathop  \approx \limits^{N \to \infty } c{\varsigma ^2}{N^2}{L_a} + \sigma _h^2
\label{eq_EX2},
\end{equation}
where ${L_a} = {L_{Id}}{L_{Iu}}$ denotes the strongest (or weakest) channel gain of the cascaded reflective links, and $c$ is a constant defined as
\begin{equation}
c = \left\{ \begin{gathered}
  {\left( {1 + \frac{{{\varepsilon ^2}}}{4}} \right)^2}\frac{{{\pi ^2}}}{{16}},\varepsilon  \ll 1 \hfill \\
  \frac{\varepsilon }{{1 + \varepsilon }},\varepsilon  \gg 1 \hfill \\ 
\end{gathered}  \right.
\label{eq_c}.
\end{equation}
Since for large $N$, we have
\begin{equation}
\begin{aligned}
  \frac{{\partial \left( {\tilde PE\left( {{X^2}} \right)} \right)}}{{\partial {L_a}}} \approx \frac{{c{\varsigma ^2}{\sigma ^2}}}{{{{\left( {1 - {\varsigma ^2}} \right)}^2}{{\left( {2K - 1} \right)}^2}PL_a^2}} > 0, \hfill \\
  \frac{{\partial \left( {\tilde PE\left( {{X^2}} \right)} \right)}}{{\partial \sigma _h^2}} = \frac{P}{{\left( {1 - {\varsigma ^2}} \right)\left( {2K - 1} \right)PN{L_a} + {\sigma ^2}}} > 0. \hfill \\ 
\end{aligned}
\end{equation}
Thus, the upper bound of the ergodic ${R}$ in \eqref{eq_up_bound} is monotonically increasing with the channel gains of the cascaded reflective link (${{L_a}}$) and cross-side direct link (${\sigma _h^2}$). 

Subsequently, by substituting \eqref{eq_power} and \eqref{eq_EX2} into \eqref{eq_up_bound}, we obtain \eqref{eq_E_rate}, which completes the proof.

\subsection{Proof of the DoF Upper Bound}
\label{APP_DoF_up}
With fixed noise power ${{\sigma ^2}}$, ${\text{SNR}} \to \infty$ is equivalent to $P \to \infty $ \cite{joudeh2016sum}. Thus, the upper bound of the DoF can be expressed as
\begin{equation}
{\text{Do}}{{\text{F}}_{{\text{up}}}} = \mathop {\lim }\limits_{{\text{SNR}} \to \infty } \frac{{E\left( {\bar R} \right)}}{{{{\log }_2}\left( {{\text{SNR}}} \right)}} = \mathop {\lim }\limits_{P \to \infty } \frac{{E\left( {\bar R} \right)}}{{{{\log }_2}\left( P \right) + O\left( 1 \right)}}
\label{eq_DoF_up_1}.
\end{equation}
Moreover, according to \eqref{eq_s_SNR}, $\varsigma^2$ can be rewritten as 
\begin{equation}
{\varsigma ^2} = 1 - C{P^{ - \alpha }}
\label{eq_sc},
\end{equation}
where $C$ is a constant with respect to $P$. Thus
\begin{equation}
\begin{aligned}
  &\mathop {\lim }\limits_{P \to \infty } E\left( {\bar R} \right) \\
  &= \mathop {\lim }\limits_{P \to \infty } 2K{\log _2}\left( {1 + \frac{{P\left( {c\left( {1 - C{P^{ - \alpha }}} \right){N^2}{{\bar L}_a} + \bar \sigma _h^2} \right)}}{{C{P^{1 - \alpha }}\left( {2K - 1} \right)N{{\bar L}_a} + {\sigma ^2}}}} \right) \hfill \\
  &= \left\{ \begin{gathered}
  2K\left( {{{\log }_2}\left( {{P^\alpha }} \right) + O\left( 1 \right)} \right){\text{, if }}0 < \alpha  < 1, \hfill \\
  2K\left( {{{\log }_2}\left( P \right) + O\left( 1 \right)} \right){\text{, if }}\alpha  = 1, \hfill \\
  2K\left( {{{\log }_2}\left( P \right) + O\left( 1 \right)} \right){\text{, if }}\alpha  > 1, \hfill \\ 
\end{gathered}  \right. \hfill \\ 
\end{aligned}
\label{eq_DoFup_2}
\end{equation}
substituting \eqref{eq_DoFup_2} into \eqref{eq_DoF_up_1} yields 
${\text{Do}}{{\text{F}}_{{\text{up}}}} = 2K\min \left( {\alpha ,1} \right)$.

\subsection{Proof of the Ergodic ${R}$ Lower Bound}
\label{APP_rate_low}
Since the lower bound of ergodic ${R}$ can be expressed as
\begin{equation}
\begin{aligned}
  E&\left( {{R}} \right) = E\left( {\sum\limits_{i = 1}^{2K} {{{\log }_2}\left( {1 + \frac{P}{{{{\tilde P}_{{\text{eq}}}} + {\sigma ^2}}}{{\left| {\varsigma {\mathbf{\tilde a}}_i^H{{\mathbf{v}}_{{\text{opt}}}} + {{\tilde h}_i}} \right|}^2}} \right)} } \right) \hfill \\
   &\geqslant E\left( {\sum\limits_{i = 1}^{2K} {{{\log }_2}\left( {1 + \tilde P{{\left| {\varsigma {\mathbf{\tilde a}}_i^H{{\mathbf{v}}_{{\text{null}}}} + {{\tilde h}_i}} \right|}^2}} \right)} } \right) \hfill \\
   &\approx 2KE\left( {{{\log }_2}\left( {1 + \tilde P{{\left| {\varsigma {\mathbf{\tilde a}}_i^H{{\mathbf{v}}_{{\text{rand}}}} + {{\tilde h}_i}} \right|}^2}} \right)} \right)  \triangleq E\left( {\underline{R}} \right) \\ 
\end{aligned}
\label{eq_lower_bound},
\end{equation}
where ${{\mathbf{v}}_{\text{null}}}$ denotes the phase shifts designed solely for interference cancellation based on the estimated CSI, and is obtained via the alternating projection algorithm in \eqref{eq6}. Since it depends only on the random interference channels, ${{\mathbf{v}}_{\text{null}}}$ is also random and independent of the signal channels ${{\mathbf{\tilde a}}_i}$. Consequently, after interference nulling, ${{\mathbf{v}}_{\text{null}}}$ can be approximated as a random vector, as indicated by the third term.

Moreover, since ${\mathbf{\tilde a}}_i^H{{\mathbf{v}}_{{\text{rand}}}} = \sum\limits_{n = 1}^N {\tilde a_{i,n}^*{v_{{\text{rand}},n}}}$ and 
\begin{equation}
E\left( {\tilde a_{i,n}^*{v_{{\text{rand}},n}}} \right) = E\left( {\tilde a_{i,n}^*} \right)E\left( {{v_{{\text{rand}},n}}} \right) = 0,
\end{equation}
\begin{equation}
\begin{aligned}
  \operatorname{var} \left( {\tilde a_{i,n}^*{v_{{\text{rand}},n}}} \right) &= E\left( {{{\left| {\tilde a_{i,n}^*{v_{{\text{rand}},n}}} \right|}^2}} \right) - {\left| {E\left( {\tilde a_{i,n}^*{v_{{\text{rand}},n}}} \right)} \right|^2} \hfill \\
   &= E\left( {{{\left| {\tilde a_{i,n}^*} \right|}^2}} \right) = {L_a} \hfill \\ 
\end{aligned}
,
\end{equation}
by the central limit theorem we have $\frac{{{\mathbf{\tilde a}}_i^H{{\mathbf{v}}_{{\text{rand}}}}}}{{\sqrt {N{L_a}} }} \sim \mathcal{C}\mathcal{N}\left( {0,1} \right)$. Thus, the lower bound $E\left( {\underline{R}} \right)$ in \eqref{eq_lower_bound} can be expressed as
\begin{equation}
\begin{aligned}
  &E\left( {\underline{R}} \right) = 2KE\left( {{{\log }_2}\left( {1 + \tilde P{{\left| {\varsigma {\mathbf{\tilde a}}_i^H{{\mathbf{v}}_{{\text{rand}}}} + {{\tilde h}_i}} \right|}^2}} \right)} \right) \hfill \\
   &= 2KE\left( {{{\log }_2}\left( {1 + \tilde P{\varsigma ^2}N{L_a}{{\left| {\frac{{{\mathbf{\tilde a}}_i^H{{\mathbf{v}}_{{\text{rand}}}}}}{{\sqrt {N{L_a}} }} + \frac{{{{\tilde h}_i}}}{{\varsigma \sqrt {N{L_a}} }}} \right|}^2}} \right)} \right) \hfill \\
   &\mathop  = \limits^{\left( a \right)} \frac{{2K}}{{\ln 2}}\int_0^\infty  {\ln \left( {1 + \hat Px} \right)} \frac{1}{\lambda }{e^{ - \frac{x}{\lambda }}}dx \hfill 
   \mathop  = \limits^{\left( b \right)} \frac{{ - 2K}}{{\ln 2}}{e^{\frac{1}{{\hat P\lambda }}}}\text{Ei}\left( { - \frac{1}{{\hat P\lambda }}} \right), \hfill \\ 
\end{aligned}
\label{eq_lower_bound2}
\end{equation}
where $\hat P = \tilde P{\varsigma ^2}N{L_a} = \frac{{P{\varsigma ^2}N{L_a}}}{{{{\tilde P}_{{\text{eq}}}} + {\sigma ^2}}}$ with ${\tilde P}_{{\text{eq}}}$ defined in \eqref{eq_power}.
Equation (a) follows directly from the definition of expected value. Specifically, $z = \frac{{{\mathbf{\tilde a}}_i^H{{\mathbf{v}}_{{\text{rand}}}}}}{{\sqrt {N{L_a}} }} + \frac{{{{\tilde h}_i}}}{{\varsigma \sqrt {N{L_a}} }} \sim \mathcal{C}\mathcal{N}\left( {0,\lambda } \right)$ with variance $\lambda  = 1 + \frac{{\sigma _h^2}}{{{\varsigma ^2}N{L_a}}}$, thus $x = {\left| z \right|^2}$ follows an exponential distribution with probability density function $\text{f}\left( x \right) = \frac{1}{\lambda }{e^{ - \frac{x}{\lambda }}}$. Equation (b) is obtained by applying \cite[eq.(4.337)]{gradshteyn2007table}, 
${\text{Ei}}\left(  \cdot  \right)$ denotes the exponential integral \cite[eq.(8.211)]{gradshteyn2007table}.

\subsection{Proof of the DoF Lower Bound}
\label{DoF_low}
Moreover, based on \eqref{eq_sc}, we have
\begin{equation}
\begin{gathered}
  \mathop {\lim }\limits_{P \to \infty } \hat P\lambda  = \mathop {\lim }\limits_{P \to \infty } \frac{{P{\varsigma ^2}N{L_a}}}{{{{\tilde P}_{{\text{eq}}}} + {\sigma ^2}}}\left( {1 + \frac{{\sigma _h^2}}{{{\varsigma ^2}N{L_a}}}} \right) \hfill \\
   = \mathop {\lim }\limits_{P \to \infty } \frac{{P\left( {\left( {1 - C{P^{ - \alpha }}} \right)N{L_a} + \sigma _h^2} \right)}}{{C{P^{1 - \alpha }}\left( {2K - 1} \right)N{L_a} + {\sigma ^2}}} \hfill \\
   = \left\{ \begin{gathered}
  \frac{{{P^\alpha }\left( {N{L_a} + \sigma _h^2} \right)}}{{C\left( {2K - 1} \right)N{L_a}}} \to \infty {\text{, if }}0 < \alpha  < 1 \hfill \\
  \frac{{P\left( {N{L_a} + \sigma _h^2} \right)}}{{C\left( {2K - 1} \right)N{L_a} + {\sigma ^2}}} \to \infty {\text{, if }}\alpha  = 1{\text{, }} \hfill \\
  \frac{{P\left( {N{L_a} + \sigma _h^2} \right)}}{{{\sigma ^2}}} \to \infty {\text{, if }}\alpha  > 1{\text{, }} \hfill \\ 
\end{gathered}  \right. \hfill \\ 
\end{gathered}
\label{eq_DoF_low_1}
\end{equation}
Therefore, the lower bound $E\left( \underline{R} \right)$ in \eqref{eq_lower_bound2} can be rewritten as
\begin{equation}
\begin{gathered}
  \mathop {\lim }\limits_{P \to \infty } E\left( \underline{R} \right) = \mathop {\lim }\limits_{P \to \infty } \frac{{ - 2K}}{{\ln 2}}{e^{\frac{1}{{\hat P\lambda }}}}{\text{Ei}}\left( { - \frac{1}{{\hat P\lambda }}} \right) \hfill \\
  \mathop  = \limits^{\left( a \right)} \mathop {\lim }\limits_{P \to \infty } \frac{{2K}}{{\ln 2}}\left( { - {C_{{\text{Euler}}}} + \ln \left( {\hat P\lambda } \right) + \frac{1}{{\hat P\lambda }}} \right) \hfill \\
   = \mathop {\lim }\limits_{P \to \infty } 2K{\log _2}\left( {\hat P\lambda } \right) \hfill \\
   = \left\{ \begin{gathered}
  2K\left( {{{\log }_2}\left( {{P^\alpha }} \right) + O\left( 1 \right)} \right){\text{, if }}0 < \alpha  < 1, \hfill \\
  2K\left( {{{\log }_2}\left( P \right) + O\left( 1 \right)} \right){\text{, if }}\alpha  = 1, \hfill \\
  2K\left( {{{\log }_2}\left( P \right) + O\left( 1 \right)} \right){\text{, if }}\alpha  > 1, \hfill \\ 
\end{gathered}  \right. \hfill \\ 
\end{gathered}
\label{eq_DoF_low_2}
\end{equation}
where equation (a) is obtained by applying \cite[eq.(8.214)]{gradshteyn2007table}, and ${C_{{\text{Euler}}}} \approx 0.577$ is the Euler’s constant. Thus, similar to \eqref{eq_DoF_up_1}, we have
${\text{Do}}{{\text{F}}_{{\text{low}}}} =\mathop {\lim }\limits_{P \to \infty } \frac{{E\left( {\underline{R} } \right)}}{{{{\log }_2}\left( P \right) + O\left( 1 \right)}} = 2K\min \left( {\alpha ,1} \right)$.

\bibliographystyle{IEEEtran} 
\bibliography{ref}

\end{document}